\newcommand{\problemtitle}[1]{\gdef\@problemtitle{#1}}% Store problem title
\newcommand{\probleminput}[1]{\gdef\@probleminput{#1}}% Store problem input
\newcommand{\problemquestion}[1]{\gdef\@problemquestion{#1}}% Store problem question
\newtheorem{fact}{Fact}
  \par\addvspace{.5\baselineskip}
  \par\addvspace{.5\baselineskip}
\def\s{{\bf s}}
\def\w{{\bf w}}
\newcommand{\remove}[1]{}
\begin{document}
\title{On the complexity and approximability of Bounded access Lempel Ziv coding\thanks{Ferdinando Cicalese is member of the Gruppo Nazionale Calcolo Scientifico-Istituto Nazionale di Alta Matematica (GNCS-INdAM).}}
\titlerunning{Hardness of Bounded access Lempel Ziv coding}
% If the paper title is too long for the running head, you can set
% an abbreviated paper title here
%
\author{Ferdinando Cicalese %\inst{1}%\orcidID{} 
\and
Francesca Ugazio %\inst{1} %\orcidID{}
}
\authorrunning{F. Cicalese and F. Ugazio}
% First names are abbreviated in the running head.
% If there are more than two authors, 'et al.' is used.
%Department of Computer Science, University of Verona, Italy}{francesca.ugazio@studenti.univr.it
\institute{Department of Computer Science, University of Verona, Italy \\
\email{ferdinando.cicalese@univr.it,francesca.ugazio@studenti.univr.it}}
\maketitle              % typeset the header of the contribution
\begin{abstract}
We study the complexity of constructing an optimal parsing $\varphi$ of a string $\s = s_1 \dots s_n$ 
under the constraint that given a position $p$ in the original text, 
and the LZ76-like (Lempel Ziv 76) encoding of $T$ based on $\varphi$, 
it is possible to identify/decompress the character $s_p$  by performing at most $c$ accesses to the LZ encoding, for a given integer $c.$
We refer to such a parsing $\varphi$ as a $c$-bounded access LZ parsing or $c$-BLZ parsing of $\s.$  
We show that for any constant $c$ the problem of computing the optimal $c$-BLZ parsing of a string, i.e., the one  with the minimum number of phrases, 
is NP-hard and also APX hard, i.e., no PTAS can exist under the standard complexity assumption $P \neq NP.$ 
We also study the ratio between the sizes of an optimal $c$-BLZ  parsing of a string $\s$ and an optimal LZ76 parsing of $\s$ 
(which can be greedily computed in polynomial time).

\keywords{Lempel Ziv parsing  \and  bounded access encoding \and NP-hardness \and approximation.}
\end{abstract}
\thispagestyle{empty}
\newpage

\setcounter{page}{1}
\section{Introduction}
\label{sec:introduction}
LZ76 \cite{LZ76} parses a text $\s$ from left to right by defining each new phrase as the longest substring already appeared in the text plus a new character. 
It is known that such a parsing is the one with the minimum number of phrases among all parsing where each phrase must be chosen to be either a single character or a previously occurred substring plus ad additional character. %LZ76 parsing can be computed in linear time. 
LZ76 parsing is the basis of the LZ77 encoding, where each phrase $\phi$ is encoded as  either (i) the triple $(p,\ell, c)$ where $p$ and $\ell$ are the position and length of the previously occurring substring equal to the longest proper prefix of $\phi$ and $c$ is the last character of $\phi$; or (ii) as the triple $(0,0,c)$ when $\phi$ has exactly one character. Remarkably, LZ76 parsing and LZ77 encoding of a string $\s$ can be done in linear time as well as the decompression of an encoding can also be done in 
linear time, reconstructing the original string from left to right \cite{Rodeh-etal-JACM81}.

However, when  an  encoding is employed for compressed data structures, additional properties are desirable. 
A property that has recently gained attention \cite{LMN24,bannai_et_al:LIPIcs.CPM.2023.3,kempa2017lz,Spire21,Spire23} is the following: given the encoding of a string $\s$ and a position $p$ (or positions $p,q$) we would like to be able to 
return the character $s_p$ (or the substring  $\s[p,q]$) without having to decompress the whole string, and possibly in time $O(1)$ (resp.\ $O(q-p)$), i.e. 
 proportional to the length of the extracted substring and not of the encoding or the text. 

To clarify this issue, let us consider the basic LZ77 encoding of a string $\s$. If we want to extract the character $s_p$ we can  compute the index $i_1$ of the phrase $\varphi_{i_1}$ that contains the $p$th character. However, only if  $s_p$ is the last character of $\varphi_{i_1}$ we would find it directly in the encoding. 
Otherwise, if $s_p$ is in the proper prefix of $\varphi_{i_1},$ from the encoding, we can compute the index $i_2$ of the phrase $\varphi_{i_2}$ and the position $q$ such that  $s_q = s_p$ is part of $\varphi_{i_2}.$ Again, if $s_q$ is the ending character of $\varphi_{i_2}$ we have access to it, otherwise additional {\em hops} will be necessary until we reach a phrase $\varphi_{i_t}$ such that $s_p$ is its end character. 
In such case we say that position $p$ requires $t$ hops or access to extract $s_p$. 

It should be clear that given a parsing $\varphi$ (or equivalently the associated encoding) 
of a string $\s$ we can compute for each position $p \in [|\s|]$ 
the value $hop^{\varphi}(p)$ corresponding to the number of hops necessary to identify $s_p.$ 
We are interested in 
parsings/encodings that guarantees $hop_{\max}^{\varphi}(\s) = \max_i hop^{\varphi}(i)$ to be small.
The problem is that in general for the  LZ76 greedy parsing/encoding $\varphi^* = \varphi_{LZ76}$ the value $hop_{\max}^{\varphi^*}(\s)$ appears to be 
large \cite{LMN24}.

In this paper we study the complexity of the following problem, which was originally considered in \cite{Kreft-Navarro-DCC10} and recently experimentally analyzed in \cite{LMN24,bannai2024height}:

\begin{myproblem}
  \problemtitle{Bounded Access Lempel Ziv Parsing (BLZ)}
  \probleminput{A strings $\s$ and an integer $c$.}
  \problemquestion{A parsing $\varphi^*_c$ of $\s$ with the minimum number of phrases among all LZ-parsings $\varphi_c$  
  of $\s$ satisfying $hop_{\max}^{\varphi_c}(\s) \leq c$}
\end{myproblem}

\noindent
{\bf Our results.} We show that the above problem is NP-hard for any constant $c.$ Moreover, we show that, in terms of approximation,
 the 
problem is $APX$-hard, which, under the standard complexity assumption $NP \neq P$ implies that no PTAS is expected to exist.
Finally we give some results regarding the ratio between the size of an optimal LZ76 parsing and the size of an optimal $c$-BLZ parsing (i.e., guaranteeing $hop_{\max}^{\varphi_c}(\s) \leq c$) for the same string $\s$.

\noindent
{\bf Related work.} The {\sc BLZ} problem was originally introduced in 
\cite{Kreft-Navarro-DCC10} where it was called {\sc LZ-Cost}. In the same paper the authors introduced and focused on another variant of the LZ76 algorithm, called {\sc LZ-End}, where the constraint is that the longest proper prefix of any phrase must have a previous occurrence whose last character is the last character of another phrase.
Both {\sc LZ-End} and {\sc LZ-Cost} were motivated by the possibility of providing fast access to substrings from the compressed text.
 In \cite{Kreft-Navarro-DCC10}, the authors stated not to have found any efficient parsing algorithm for {\sc LZ-Cost}. In \cite{LMN24} several algorithms for constructing $c$-BLZ parsing are presented with extensive empirical evaluation with respect to an optimal LZ76 greedy parsing. The authors show that, with some tolerable cost in terms of additional space, their parsing algorithms can guarantee much better access cost with respect to the original LZ76 parsing. To the best of our knowledge, ours is the first theoretical result on the problem. 
Extensive literature exists on LZ-End parsing. 
Finding an optimal LZ-End parsing was proved NP-hard in \cite{bannai_et_al:LIPIcs.CPM.2023.3}. Previously, the focus had been on comparing the greedy LZ-End parsing to the greedy LZ76 parsing. In \cite{Kempa-Saha-SODA} it is shown that the optimal LZ-End parsing is bounded by 
$O(z \log^2n)$ where $z$ is the size of the LZ76 parsing. The upper bound was later improved by a factor of $\log \log (n/z)$ in \cite{Spire23}. 

\section{Notation and Basic Facts}

{\bf Definitions for strings.}
We denote by $\Sigma^*$ the set of all strings of finite length over a finite alphabet $\Sigma$.
For a string $s \in \Sigma^*$ we denote by $|s|$ the length of $s,$ i.e., the number of its characters.
We refer to the unique string of length $0$ as the null-string, and denote it by $\epsilon.$ We assume that $\epsilon \in \Sigma^*.$ 
For any integer $n \geq 0,$ let $\Sigma^n = \{s \in \Sigma^* \mid |s| = n\}.$
For a string $\s \in \Sigma^n$ and any integers $1 \leq i \leq j \leq n,$ we denote by $s_i$ the $i$th character of $\s$ and by $\s[i,j]$ the sequence of characters $s_i s_{i+1} \dots s_j.$ We say that $s[i,j]$ is a substring of $\s.$ If $i = 1$ (resp.\ $j=n$) we say that $\s[i,j]$ is a {\em prefix} (resp.\ {\em suffix}) of $\s.$ 
A square of a string $\s$ is a substring ${\bf t}$ of $\s$ which occurs twice consecutively in $\s,$ i.e., $\s = \s' {\bf t} {\bf t} \s'',$
for some (possibly empty) strings $\s', \s''.$
If no square is present in $\s$ we say that $\s$ is {\em square-free}.

\begin{definition}[Parsing] A parsing $\varphi$ of a string $\s$ is a partition of $\s$ into consecutive substrings,
$\s[1, i_1] \s[i_1+1, i_2] \cdots \s[i_{t-1}+1, i_t].$ For each $j = 1, \dots, t,$ the substring $\s[i_{j-1}+1, i_{j+1}]$ is called a {\em phrase} of the parsing $\varphi$ and will be denoted by $\varphi_j.$ 
The {\em size of the parsing} $\varphi = \mid \varphi_1 \mid \dots \mid  \varphi_t \mid,$ is the number $t$ of its phrases, 
and will be denoted by $|\varphi (\s)|$ (or also $|\varphi|$ whenever $\s$ is clear from the context).

For a substring $\w$ of $\s,$ we use $\varphi(\w)$ to denote the parsing
induced by $\varphi$ on $\w$ and with $|\varphi(\w)|$ the number of its phrases\footnote{We will generally consider situations where 
$\w$ starts and ends at the boundary of some phrase of $\varphi,$ i.e., $\w = \s[i_j+1, i_{j'}].$ In such a case, we have
$\varphi(\w) = \mid \varphi_{j+1} \mid \varphi_{j+2} \mid \cdots \mid \varphi_{j'} \mid$ and $|\varphi(\w)| = j'-j.$}. 
\end{definition}

\begin{definition}[LZ-parsing and encoding]  
We say that $\varphi$ is an LZ-parsing if for each $j=1, \dots, t$  the phrase $\varphi_j = \s[i_{j-1}+1,i_{j}]$ is either a single character or 
$\s[i_{j-1}+1, i_{j}-1]$ is a substring of $\s$ occurring in $\s$ strictly before the starting position $i_{j-1}+1$ of the the phrase $\varphi_j,$ i.e., 
there exists $\ell \leq i_{j-1},$ such that $\s[\ell, \ell+(i_{j}-i_{j-1})-2] = \s[i_{j-1}+1, i_{j}-1].$
We refer to $\s[\ell, \ell+(i_{j}-i_{j-1})-2] = \s[i_{j-1}+1, i_{j}-1]$ as the 
{\em source}  of $\varphi_j.$ Analogously, for each position 
$p = i_{j-1}+t,$ for $t = 1, \dots, i_{j}-2-i_{j-1}$ we say that position $q = \ell+t$
is the source of $p$ in $\varphi.$

If $\ell+(i_{j}-i_{j-1})-2 \geq i_{j-1}+1$ i.e., the substrings $\s[\ell, \ell+(i_{j}-i_{j-1})-2]$ and $\varphi_j$ overlap, it is easy to see that $\varphi_j$ must be a power of $\s[\ell, i_{j-1}]|,$ i.e., $\varphi_j = \s[\ell, i_{j-1}]\s[\ell, i_{j-1}]\dots
\s[\ell, i_{j-1}]\s[\ell, \ell - 1 +(i_{j}-i_{j-1}) \mod (i_{j-1}-\ell+1)].$
In this case, for each $t = 1, \dots, i_{j}-i_{j-1} - 2$ the source of $p=i_{j-1}+t$
is the position $q = \ell + \left( (t-1) \mod (i_{j-1} - \ell +2) \right),$ i.e., the position of the corresponding character in the repeated substring $\s[\ell, i_{j-1}]|.$

For an $LZ$-parsing we  assume that for any non-singleton phrase,
$\varphi_j = \s[i_{j-1}+1,i_{j}],$ the {\em source} substring's starting position $\ell$ is also chosen together with  the parsing\footnote{Note that in general for a given phrase there could be more than one possible preceding substring which coincide to its longest proper prefix. The assumption that sources are chosen with the parsing $\varphi$ implies that the encoding is uniquely defined by $\varphi$ and allows us to focus only on the parsing.}. 

Given an LZ-parsing $\varphi$ of a string $\s$ an LZ76 encoding of $\s$ is obtained by replacing each phrase of $\varphi_j$ by 
\begin{itemize}
\item the triple $(0,0,s_{i_{j-1}+1})$ if $\varphi_j$ is a singleton phrase, i.e., $i_j = i_{j-1}+1.$
\item the triple $(\ell, i_j-i_{j-1}-1, s_{i_j})$ if $|\varphi_j| > 1$ and $\varphi_j = s[i_{j-1}+1, i_j] = s[\ell, \ell+(i_j-i_{j-1})-2].$
\end{itemize}
\end{definition}

\begin{definition}[Access time - hop-number]
Let $\varphi$ be an LZ-parsing of a string $\s.$ 
We define the access time function 
$hop^{\varphi} : [|\s|] \mapsto \mathbb{N}$
mapping each position $p$ to the number of accesses necessary 
to identify $s_p$ given the LZ-parsing $\varphi$:
$$hop^{\varphi}(p) = \begin{cases}
    0 & \mbox{if $p$ is the position of the last character of a phrase of  $\varphi$}\\
hop^{\varphi}(q)+1 & \mbox{if position $q$ is the source of position $p$ in $\varphi$} 
\end{cases}
$$

We will also refer to $hop^{\varphi}(x)$ as the hop-number of position $x.$

Given an LZ parsing $\varphi$ 
of a string $\s,$ 
we define 
$$hop^{\varphi}_{\max}(\s) = \max_{p = 1, \dots, |\s|} hop^{\varphi}(p).$$
\end{definition}

\begin{definition}[c-BLZ parsing]
Fix an LZ-parsing  $\varphi$ %and an LZ encoding $e_{\varpi}$ 
of a string $\s.$ 
We say that $\varphi$ is a $c$-BLZ parsing of $\s$ if $hop^{\varphi}_{\max}(\s) \leq c,$ 
i.e., any character of $\s$ can be recovered from the parsing $\varphi$ with at most $c$ hops.
\end{definition}

\noindent{\bf Definitions for graphs.} All graphs considered in this paper are undirected.
A \emph{graph} is a pair $G = (V,E)$ where $V=V(G)$ is a finite set of \emph{vertices} and $E=E(G)$ is a set of $2$-element subsets of $V$ called \emph{edges}. 
Two vertices $u$ and $v$ in a graph $G = (V,E)$ are \emph{adjacent} if $\{u,v\}\in E$. We also say that edge $\{u,v\}$ is incident in vertices $u$ and $v.$
We say that $G$ is $B$-regular if for each vertex $v$ there are exactly $B$ edges incident in $v.$

A \emph{vertex cover} in $G$ is a set $C$ of vertices such that every edge of $G$ is incident in at least one vertex of $C$. The minimum size of
a vertex cover of $G$ is denoted by $\tau(G).$
The {\sc Minimum Vertex Cover (Min-VC)} problem is the following minimization problem: Given a graph $G$ find a vertex cover of $G$ of size $\tau(G).$  We will use {\sc Min-VC-4} for the 
restriction of {\sc MIN-VC} to $4$-regular graphs.

\section{The {\sc BLZ} problem is NP-hard}

For the hardness of the {\sc BLZ Problem} in the introduction, it would be enough to prove the existence of a single $c$ for which the problem is NP-hard. In fact, we are able to show a stronger result, namely, that the hardness of the problem holds for any possible choice of $c$ (constant w.r.t.\  $\s$). For this, we consider the following  variant where the parameter $c$ is fixed (not part of the input).  

\begin{myproblem}
  \problemtitle{$c$-Bounded Access Lempel Ziv Parsing ($c$-BLZ)}
  \probleminput{A strings $\s$.}
  \problemquestion{An LZ-parsing $\varphi^*_c$ of $\s$ with the minimum number of phrases among all LZ-parsings $\varphi_c$  
  of $\s$ satisfying $hop_{\max}^{\varphi_c}(\s) \leq c$}
\end{myproblem}

We show that for any integer $c\geq 1$ the {\sc $c$-BLZ} problem is NP-hard. 
This implies the  hardness of the {\sc BLZ Problem} in the introduction.

\subsection{The reduction} \label{sec:reduction}
Let $G = (V, E)$ be a graph. We are going to show how to build a string $\s = \s_G$ over an alphabet of size $O(m+n),$ where $n = |V|$ and 
$m = |E|.$ The string $\s_G$ will be constructed in order to 
guarantee that $G$ has a vertex cover of size $k$ iff the string 
$\s_G$ has a $c$-BLZ parsing of size $4n + 6m + k + \ell(c-1),$ where $\ell$ is a constant $> k.$ Since finding a minimum vertex cover is NP-hard, from our reduction the  NP-hardness of the {\sc $c$-BLZ} problem follows.

Let $v_1, \dots, v_n$ denote the vertices of $V$ and $e_1, \dots, e_m$ denote the edges in $E.$
The string $\s$ will be the concatenation of several pieces that we are now going to describe.

Let $P$ be the following string that contains a distinct character for each vertex and a distinct character for each edge of $G$ separated by additional and distinct {\em separator}-characters $\#^p_1, \dots, \#^p_{n+m}.$ 
\[P = v_1\, \#^p_1 \, v_2\, \#^p_2 \dots v_n\, \#^p_n \, e_1 \, \#^p_{n+1} \, e_2 \, \#^p_{n+2} \, \dots e_m \, \#^p_{n+m} \]

For each $i = 1, \dots n,$ we have a ({\em vertex}) string $X_i$ associated to vertex $v_i,$ which is defined as follows:
   
\[X_i = v_i' \, v_i\, \#^v_i.\]

Moreover, we define $X$ to be the concatenation of these vertex strings:
\[X = X_1 X_2 \dots X_n.\]

For each $i = 1, \dots m,$ we have an ({\em edge}) string $Y_i$ associated to edge $e_i = (v_p, v_q),$ which is defined as follows: 
\begin{equation} \label{eq:Yi}  
Y_i = v_p' \, v_p \, e_i \, \$_i \, v_q' \, v_q \, e_i \, \$_i \,  \#^e_i .
\end{equation}
Moreover, we define $Y$ to be the concatenation of these edge strings:
\[Y = Y_1 Y_2 \dots Y_m.\]

Let $\s^{(1)} = P \, X \, Y$

For any integer $\ell > k,$ we recursively define the strings $\alpha^{(i)},$ for $i=1, 2, \dots$

$$\alpha^{(i)} = 
\begin{cases}
\s^{(1)} = PXY & i = 1\\
\alpha^{(i-1)} \, \beta^{(i-1)} & i > 1
\end{cases}
$$
where for $i \geq 1$
\[\beta^{(i)} = 
\alpha^{(i)} \, \#^{(i)}_1 \, \dots \, \alpha^{(i)} \, \#^{(i)}_{\ell},\]

Then, we let the string $\s = \s_G$ be defined by: 

\begin{multline*}
\s = \alpha^{(c)} =  \alpha^{(1)} \, \beta^{(1)} \, \beta^{(2)} \, \dots \, \beta^{(c-1)} \hfill{~} \\ = 
\underbrace{\underbrace{\underbrace{\alpha^{(1)} \beta^{(1)}}_{\alpha^{(2)}} \,
\underbrace{ \alpha^{(2)} \, \#^{(2)}_1 \dots \alpha^{(2)} \, \#^{(2)}_{\ell}}_{\beta^{(2)}} }_{\alpha^{(3)}}
\underbrace{ \alpha^{(3)} \#^{(3)}_1 \dots \alpha^{(3)} \#^{(3)}_{\ell}}_{\beta^{(3)}}     }_{\alpha^{(4)}} \dots 
\underbrace{ \alpha^{(c-1)} \#^{(c-1)}_1 \dots \alpha^{(c-1)} \#^{(c-1)}_{\ell}}_{\beta^{(c-1)}} 
\end{multline*} 

For each $i = 1, \dots, c-1,$ and for  $\gamma = 1, \dots, \ell,$ 
we denote by $\alpha^{(i)}_{\gamma}$ the $\gamma$th copy of $\alpha^{(i)}$ that appears as a substring in
$\beta^{(i)},$ i.e., the substring of $\beta^{(i)}$ of size $|\alpha^{(i)}|$ immediately preceding the unique occurrence in $\beta^{(i)}$ of the character 
$\#^{(i)}_{\gamma}.$

\medskip

\noindent
{\bf The intuition behind the reduction.}
The basic idea of the reduction is based on the fact that for any parsing of the string $\s^{(1)} = PX_1 \dots X_n Y_1 \dots Y_m$ the following holds:
\begin{itemize}
\item there is a unique way to parse $P,$ 
\item for each $X_i$ there are two ways to parse it, into $3$ or $2$ phrases; which will, respectively, encode the inclusion or not of the vertex $v_i$ into a vertex cover for $G$
\item for a $Y_j$ corresponding to edge $e_j = (v_i, v_{i'}),$ $Y_j$ can be parsed into $4$ phrases if and only if at least one between $X_i$ and $X_{i'}$ has been parsed into $3$ phrases, which encodes the fact that $e_j$ is covered only if one of its incident vertices are in the vertex cover. Otherwise $Y_j$ must be parsed into $\geq 5$ phrases.
\end{itemize} 
The bound on the parsing set by the reduction can then be attained only if each $Y_j$ is parsed into $4$ phrases, which by the above properties, implies that the set of $X_i$'s parsed into $3$ phrases must be a vertex cover of $G.$

\medskip
The next theorem formalizes the relationship between minimum vertex covers of $G$ and 
shortest $1$-BLZ parsing of the prefix $\s^{(1)}$ of $\s_G.$ We note that this theorem 
is already enough to establish that $1$-{\sc BLZ} is NP-hard, which implies also the hardness of the {\sc BLZ} problem in the introduction.

\begin{theorem} \label{thm:case1}
Fix a graph $G = (V,E)$ and a non-negative integer $k,$ and let $\s$ be the string produced by the construction above. Then, 
$G$ has a vertex cover of size $k$ if and only if there exists a $1$-BLZ parsing $\varphi$ for $\s^{(1)}$ of size $4n+6m+k.$
\end{theorem}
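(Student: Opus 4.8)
The plan is to prove both implications of the equivalence by turning the informal ``intuition'' above into three structural facts about \emph{every} LZ-parsing of $\s^{(1)}=PXY$, plus one fact that invokes the hop $\le 1$ constraint. First I would establish the rigid block structure. Since $P$ is a prefix of $\s^{(1)}$ made of pairwise distinct characters, every character of $P$ is a first occurrence, so no phrase can have a proper prefix occurring earlier; hence $P$ is parsed into exactly $2(n+m)$ singletons. The separators $\#^v_i,\#^e_i,\$_i$ are likewise first occurrences at their positions, which forces a phrase boundary at each of them and shows that no phrase may cross the boundary between two consecutive blocks $P,X_1,\dots,X_n,Y_1,\dots,Y_m$. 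A short case check then gives that each $X_i=v_i'\,v_i\,\#^v_i$ has exactly two parsings, $v_i'\mid v_i\mid\#^v_i$ and $v_i'\mid v_i\#^v_i$, the latter sourcing $v_i\#^v_i$ from the unique earlier $v_i$ in $P$; thus $|X_i|\in\{2,3\}$, and the copy of $v_i$ inside $X_i$ has hop-number $0$ iff $X_i$ uses three phrases. I record the encoding ``$v_i$ is in the cover $\iff X_i$ is three-phrase''.

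Next I would pin down the edge blocks. For $Y_i$ with $e_i=(v_p,v_q)$, the new characters $\$_i$ and $\#^e_i$ force a boundary after the fourth position, and a direct check that neither $v_p'v_pe_i$ nor $v_q'v_qe_i\$_i$ occurs earlier rules out a single phrase covering the first four or the last five characters of $Y_i$; this yields $|Y_i|\ge 4$. Enumerating the valid four-phrase parsings leaves exactly four candidates, obtained by splitting the first four characters as $(v_p'v_p)(e_i\$_i)$ or $(v_p'v_pe_i)(\$_i)$ and the last five as $(v_q'v_q)(e_i\$_i\#^e_i)$ or $(v_q'v_qe_i)(\$_i\#^e_i)$. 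For each candidate I would compute every position's hop-number, using that the repeated block $e_i\$_i$ can only be sourced at positions $3,4$ of $Y_i$ and that $v_p'v_p$ (resp.\ $v_q'v_q$) occurs only inside $X_p$ (resp.\ $X_q$) and earlier edge blocks. The outcome is: one candidate always has a position of hop $2$ (the second $e_i$ pointing back at the first, which itself has hop $1$) and is never $1$-BLZ; one requires $X_p$ to be three-phrase; one requires $X_q$ to be three-phrase; the last requires both. Hence $Y_i$ admits a four-phrase $1$-BLZ parsing \emph{iff} at least one of $X_p,X_q$ is three-phrase, i.e.\ iff $e_i$ is covered by the encoded set.

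The step I expect to be the main obstacle is exactly this hop computation, and in particular ruling out a tempting shortcut. When the phrase beginning $Y_i$ sources its two-character prefix $v_p'v_p$, one might try to borrow it from an earlier edge block $Y_j$ incident to $v_p$ rather than from $X_p$, hoping to keep $Y_i$ at four phrases even when $X_p$ is two-phrase. I would show this never succeeds: sourcing $v_p'v_p$ aligns its two characters with two consecutive earlier positions, so \emph{both} the source copy of $v_p'$ and the source copy of $v_p$ must have hop $0$; but in every admissible parsing of an edge block the primed symbol $v_p'$ begins a multi-character phrase and hence has hop $1$, which disqualifies every edge-block source. The only earlier place where $v_p'$ and $v_p$ sit consecutively with both at hop $0$ is a three-phrase $X_p$. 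This is precisely the role of the primed characters: without them the borrowing would propagate along paths of incident edges and let a single covered vertex certify arbitrarily many edges, collapsing the reduction.

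Finally I would assemble the counting argument. For the forward direction, given a vertex cover $C$ with $|C|=k$ I parse $P$ into singletons, each $X_i$ into three phrases iff $v_i\in C$ (total $2n+k$), and each $Y_i$ into the four-phrase candidate matched to one covered endpoint (total $4m$), for size exactly $4n+6m+k$; the per-position hop bounds verified above certify it is $1$-BLZ. Conversely, given a $1$-BLZ parsing of size $4n+6m+k$, write its size as $2(n+m)+(2n+a)+(4m+\sum_i b_i)$, where $a$ counts the three-phrase $X_i$'s and $|Y_i|=4+b_i$, so $a+\sum_i b_i=k$. Let $R$ be the set of vertices with three-phrase $X_i$ ($|R|=a$). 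By the edge-block analysis any edge not covered by $R$ has both endpoints two-phrase and therefore forces $b_i\ge 1$, so the number of $R$-uncovered edges is at most $\sum_i b_i$; adding one endpoint per uncovered edge to $R$ yields a vertex cover of size at most $a+\sum_i b_i=k$, which may be padded to size exactly $k$ (we may assume $0\le k\le n$). This establishes the equivalence.
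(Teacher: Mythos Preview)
Your overall architecture matches the paper's: fix the parsing of $P$, show each $X_i$ has exactly two parsings, lower-bound each $Y_i$ by $4$ phrases, characterize when $4$ phrases suffice, then count. The forward direction is fine. But the backward direction has a real gap, precisely at the step you flagged as the main obstacle.

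Your key claim, ``in every admissible parsing of an edge block the primed symbol $v_p'$ begins a multi-character phrase,'' is false. Nothing prevents a $1$-BLZ parsing from spending five (or more) phrases on some $Y_j$ and placing $\mid v_p'\mid v_p\mid$ there as two singletons, both with hop $0$. For instance, with $e_j=(v_p,v_q)$ and $X_q$ three-phrase, the parsing $v_p'\mid v_p\mid e_j\,\$_j\mid v_q'\,v_q\,e_j\mid \$_j\,\#^e_j$ is a valid $1$-BLZ parsing of $Y_j$ with $v_p'$ a singleton. Once such a $Y_j$ exists, any later $Y_i$ with $e_i=(v_p,v_r)$ can source $v_p'v_p$ from $Y_j$ and be parsed into exactly four $1$-BLZ phrases even though $X_p$ is two-phrase. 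Hence your implication ``$e_i$ not covered by $R\Rightarrow b_i\ge 1$'' fails, and the bound ``number of $R$-uncovered edges $\le\sum_i b_i$'' collapses: one $Y_j$ with $b_j=1$ can certify arbitrarily many later edges incident to $v_p$, each with $b_i=0$.

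The paper plugs exactly this hole with a normalization lemma you are missing (its Claim~1): call $v_p$ \emph{selected} if the pair $\mid v_p'\mid v_p\mid$ appears anywhere in $\varphi$ (in $X_p$ or in some $Y_j$), and show that without loss of generality every selected vertex is selected in its own $X_p$. The swap is size- and hop-preserving: if the selection sits in some $Y_j$, replace $\varphi(X_p)$ by the three-phrase version, merge the two singletons in $Y_j$ into one phrase sourced from $X_p$, and redirect any phrase that used the $Y_j$-copy of $v_p'v_p$ as a source to the $X_p$-copy instead. After this normalization your set $R$ coincides with the set of all selected vertices, Claim~3 (``$Y_i$ has a four-phrase $1$-BLZ parsing iff an endpoint is selected'') now really says ``iff an endpoint is in $R$,'' and your counting argument then goes through verbatim. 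Without this normalization step, the reduction as you wrote it does not close.
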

\begin{proof}
String $P$ can be parsed in a unique way with each character being a distinct phrase, since each character is different from any other one, hence we have
$$\varphi(P) = \mid v_1 \mid \#^p_1 \mid v_2 \mid \#^p_2 \mid \dots \mid v_n \mid \#^p_n \mid e_1 \mid \#^p_1 \mid e_2 \mid \#^p_1 \mid \dots 
\mid e_m \mid \#^p_m \mid.$$

For each $X_i = v'_i \, v_i \,  \#^v_i$ there are only two ways to parse it, satisfying the bound that the max-hop-number is $\leq 1,$ namely
\begin{equation} \label{vi-selected}
v'_i \mid v_i \mid \#^v_i \mid
\end{equation}
\begin{equation} \label{vi-not}
v'_i \mid v_i \, \#^v_i \mid
\end{equation}

To see this, observe the occurrence of $v'_i$ in $X_i$ is the first occurrence of this character in $\s^{(1)},$ while $v_i$ already occurred in $P.$ 

We refer to the case (\ref{vi-selected}) by saying that the parsing {\em selects} the vertex $v_i.$ 
Moreover, we also say that $\varphi$ selects vertex $v_i$ if the pair of adjacent phrases $\mid v_i' \mid v_i \mid$ appears somewhere in $\varphi(\s^{(1)}),$ with result that 
the hop-numbers of the positions of these occurrences of $v_i$ and $v_i'$ are both $0.$

Conversely, if the pair of consecutive phrases $\mid v_i' \mid v_i \mid$ does not appear in $\varphi$ we 
say that the parsing {\em does not select} the vertex $v_i.$

\medskip

\noindent
{\em Claim 1.}
Without loss of generality, we can assume that if a parsing selects $v_p$ then, it parses the substring $X_p$ (in $\s^{(1)}$) into three phrases as in (\ref{vi-selected})
$$\varphi(X_p) = \mid v_p'\mid v_p \mid \#^v_p \mid$$

\begin{proof}
It is enough to observe that if a parsing $\varphi$ selects $v_p$ and does not parse $X_p$ into three phrases, then there is an edge substring $Y_i,$ corresponding to some edge $e_i$ incident to $v_p,$ where $\varphi$ produces  the two consecutive phrases $\mid v_p' \mid v_p \mid.$ 
Consider the parsing $\varphi'$ that: (i) 
parses $X_p$ into three phrases and uses a single 
phrase for the substring $Y'_i = v_p' v_p$ in $Y_i$; and 
for any phrase $\phi$ that uses the substring $Y'_i = v'_p v_p,$ of $Y_i,$ as a source (where both positions have $hop = 0$ in $\varphi$) we make the substring $v'_p v_p$ of $X_i$ to be the source of $\phi$ in $\varphi'$. 
It follows that $|\varphi'| = |\varphi|$ and
$$ hop^{\varphi'}_{\max}(\s) = hop^{\varphi}_{\max}(\s)$$
and $\varphi'$ satisfies the claim with respect to the way $v_p$ is selected. 
\qed
\end{proof}
\medskip

Let us now consider a substring $Y_i$ corresponding to the edge $e_i = \{v_p, v_q\}.$
Recall the definition of $Y_i$ from (\ref{eq:Yi}).
\medskip

\noindent
{\em Claim 2.} Every $1$-BLZ parsing of $\s^{(1)}$ parses $Y_i$ into $\geq 4$ phrases.
\begin{proof}
Since before its first occurrence in $Y_i$ the character $e_i$ only appeared in $P$ and  it was not preceded by $v_p$ no phrase 
containing $v_p$ can contain also $\$_i.$ Moreover since the first occurrence of $\$_i$ is the first occurrence of such character in $\s^{(1)}$
every parsing will have a phrase ending here. Therefore, the substring $v'_p v_p e_i \$_i$ must be parsed in at least two phrases.
Analogously, no phrase can contain both $v_q$ and $\$_i,$ hence also the substring $v_q' v_q e_i \$_i \#_i^e$ requires at least 2 phrases.    \qed
\end{proof}

\medskip

\noindent
{\em Claim 3.} A $1$-BLZ parsing $\varphi$ of $\s^{(1)}$  can parse $Y_i$ into $4$ phrases if and only if 
$\varphi$ selects at least one of the vertices $v_p, v_q.$
\begin{proof}
If both vertices $v_p$ and $v_q$ are selected a valid $1$-BLZ parsing of $Y_i$ into exactly $4$ phrases is:
$$\varphi(Y_i) = 
\underset{1~~1~~0~~~0~~~~1~~~1~~0~~~1~~0~}{v_p' \, v_p \, e_i \mid \$_i \mid v'_q \, v_q \, e_i \mid \$_i \, \#^e_i},$$
where the number underneath each character is the 
hop-number for the corresponding position. These follow since $\varphi(X_p) = \mid v'_p \mid v_p \mid \#^v_p \mid, \, \varphi(X_p) = \mid v'_p \mid v_p \mid \#^v_p \mid$  provide sources $v'_p v_p$ and $v'_q v_q$ where all positions have hop-number 0.

If only vertex $v_p$ is selected, from $\varphi(X_p)$ and $\varphi(X_q)$ we have a source $v'_p v_p$ where both positions have hop-number $0$ and the source $v'_q$ with hop-number 0. Hence, a valid $1$-BLZ parsing for $Y_i$ is:
$$\varphi(Y_i) = 
\underset{1~~1~~0~~~~0~~~~1~~~0~~~~0~~1~~0~}{v_p' \, v_p \, e_i \mid \$_i \mid v_q' \, v_q \mid e_i \, \$_i \, \#^e_i}, 
$$
where the source of $\mid e_i \, \$_i \, \#^e_i \mid $ is the previous occurrence of $e_i \, \$_i$  in $Y_i.$
Analogously, if only the vertex $v_q$ is selected a valid $1$-BLZ parsing for $Y_i$ is:
$$ 
\varphi(Y_i) = \underset{1~~0~~~1~~0~~~~1~~~1~~0~~~1~~0~}{v_p' \, v_p \mid e_i \, \$_i \mid v_q' \, v_q \, e_i \mid\$_i \,  \#^e_i},
$$
with the sources: for $\mid e_i \, \$_i \mid$ from $P$ and for  $\mid\$_i \,  \#^e_i \mid$  from previous $\$_i$ in $Y_i.$

It remains to be shown that if the $1$-BLZ parsing $\varphi$ does not select any of the vertices $v_p, v_q$, then it must parse  $Y_i$ into $\geq 5$ phrases. 
Since the vertex $p$ is not selected there is no substring $v_p' \, v_p$ where both corresponding positions have hop-number $0$. Because of this in the first half of $Y_i,$ that encodes the incidence in the the vertex $p$, namely, the substring $v_p' \, v_p \,e_i \, \$_i $, no phrase containing $v_p' \, v_p$ can contain also $e_i,$ for otherwise the position of $v_p'$ would have hop-number 2. Analogously since also the vertex $q$ is not selected no phrase containing $v_q' \, v_q$ can contain also $e_i$. Additionally, as reported in the previous claim, the first occurrence of $\$_i$ and $\#^e_i$ need to be positioned at the end of a phrase. These observations imply that for any valid $1$-BLZ parsing of $Y_i$ there are only two options left: (i) the parsing splits the two adjacent characters $e_i\, \$_i$ into two singleton phrases giving each one of them hop-number $0$ and uses a single phrase for the suffix $\mid e_i \$_i \#^e_i \mid$. The resulting parsing is then 
$$ 
\underset{1~~0~~~~0~~~0~~~~1~~~0~~~1~~1~~0~}{v_p' \, v_p \mid e_i \mid \$_i \mid v_q' \, v_q \mid e_i  \$_i \,  \#^e_i}
$$
with $5$ phrases. 
(ii) Alternatively, the substring $e_i\, \$_i$ in the first half is parsed as a single phrase. In this case the two positions have hop-number $1$ and $0,$ respectively, as the only occurrence of $e_i$ is in $P$ and it is not followed by $\$i$. In this case the second occurrence of $e_i$ can't be parsed in the same phrase of $\#^e_i$ and two different phrases are needed for the last three character of $Y_i$, namely we have the following two possible parsings:
$$\underset{1~~0~~~1~~0~~~~1~~~0~~~0~~~1~~0~}{v_p' \, v_p \mid e_i \, \$_i \mid v_q' \, v_q \mid e_i \mid \$_i \,  \#^e_i} \qquad 
\underset{1~~0~~~1~~0~~~~1~~~0~~~1~~0~~~0~}{v_p' \, v_p \mid e_i \, \$_i \mid v_q' \, v_q \mid e_i \, \$_i \mid  \#^e_i}
$$
both leading to a total number of phrases equal to $5.$ 

\qed
\end{proof}
We can now complete the proof of the theorem by considering the two directions of the equivalence in the statement.

\noindent
{\em 1. If $\tau(G) \leq k$ then there exists a $1$-BLZ parsing of $\s$ of size $\leq 4n+6m+k.$} Let us now assume that there is a vertex cover $C$ of $G$ of size $\leq k.$ Consider the parsing $\varphi_C$ that in $X$ {\em selects} each vertex in $C.$ Such parsing, by Claim 2, will have: $4$ phrases for each edge substring $Y_i$; $2$ phrases for each $X_i$ corresponding to a non-selected vertex $v_i$; and (by Claim 1) $3$ phrases for each $X_i$ corresponding to a selected vertex $v_i.$ Considering also the $2n+2m = |P|$ phrases necessary to parse $P,$ in total, the parsing $\varphi_C$ will have size
$$|\varphi_C| = |P| + 2|V| + |C| + 4 |E| \leq 4n+6m + k,$$
where for the last inequality we use $|C| \leq k.$

\bigskip
\noindent
{\em 2. If $\tau(G) > k$ then any $1$-BLZ parsing of $\s$ has size $> 4n+6m+k.$}
Recall that we say that a parsing $\varphi$ selects a vertex $v_p$ if it produces the sequence of two adjacent phrases $\mid v_p'\mid v_p\mid$, hence with both positions having hop-number $0,$ either in the substring $X_p$ or in some substring $Y_i$ corresponding to an edge incident to $v_p$. 
By Claim 1 we can assume that selection of a vertex $v_p$ is only done in the string $X_p.$

Let $k' = \tau(G) > k$. Given $\varphi$ there are two possible scenarios: 

\noindent
{\em Case 1.} $\varphi$ selects more than $k$ vertices. 
Let $h$ be the number of vertices selected by $\varphi$. By Claim 1, for each selected vertex  $v_i$ the corresponding $X_i$ is parsed in $3$ phrases so the number of phrases used by $\varphi$ over all $X_i$ is $|\varphi(X)| = 2n+h.$ By Claim 2, the number of phrases used to parse all the $Y_i$ is at least $4m$. Then,    
$|\varphi| = |\varphi(P)|+|\varphi(X)|+|\varphi(Y)| \geq 4n+6m + h > 4n+6m+k,$
where for the last inequality we use the hypothesis $h>k$.

\medskip

\noindent
{Case 2.} $\varphi$ selects $\leq k$ vertices. Let $C$ be the set of such selected vertices. Then there exist $t \geq k' - |C|$ edges not covered by $C.$ We can affirm this because, it's possible to build a vertex cover from $C$ adding a vertex for each edge that is not covered in $C$, this vertex cover would have a size equal to $|C| + t$ that need to be $\geq k'$ since $k'$ is the size of a minimum vertex cover in $G$.
Each one of the $t$ edges not covered corresponds to a $Y_i$ that is parsed by $\varphi$ in at least $5$ phrases, by  Claim 3. The total number of phrases used to parse all these $Y_i$s is then $\geq 5\,t$, hence the number of phrases used to parse $Y = Y_1 \cdots Y_m$ is at least $4m + t.$ Moreover, the number of phrases used to parse all of $X = X_1 \cdots X_n$ is at least $2n + |C|.$ Putting together these two bound and the fact that $|\varphi(P)| = 2n+2m,$ we conclude that
$$|\varphi| = |\varphi(P)| + |\varphi(X)|+ |\varphi(Y)| \geq 4n+6m + t + |C| \geq 4n+6m+k' > 4n+6m+k,$$
where we are using $t+|C| \geq k'$ and $k' > k $.
\qed
\end{proof}

\medskip

The following  theorem extends the result of Theorem \ref{thm:case1} to the case of $c > 1.$ The idea is to show that any $c$-BLZ $\varphi$ parsing for $\s_G$ that has size $\leq 4n+6m+k+(c-1)\ell$ when restricted to $\s^{(1)}$ has to be a $1$-BLZ parsing of $\s^{(1)}$ of size $4n+6m+k,$ hence encoding a vertex cover of $G$ of size $k.$  The proof is deferred to the appendix.

\begin{theorem} \label{thm:minVC-minParsing}
Fix a graph $G=(V,E)$ and let $n = |V|, m = |E|.$ Fix integers $c \geq 1$ and $0 \leq k \leq n,$ and $\ell > k.$
Let $\s_G$ be the string produced by the procedure in section \ref{sec:reduction}. Then 
$G$ has a vertex cover of size $k$ if and only if there is a $c$-BLZ parsing of $\s_G$ of size $4n+ 6m + k + \ell(c-1).$
\end{theorem}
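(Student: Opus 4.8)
The plan is to reduce the case $c>1$ to the already-proved case $c=1$ by exploiting the recursive nesting structure of $\s_G = \alpha^{(c)}$. The key structural observation is that $\alpha^{(i)} = \alpha^{(i-1)}\beta^{(i-1)}$, where $\beta^{(i-1)}$ consists of $\ell$ consecutive copies of $\alpha^{(i-1)}$, each terminated by a unique separator $\#^{(i-1)}_\gamma$. Since each separator character is globally unique and appears exactly once, every parsing must end a phrase at each separator, so the $\ell$ copies inside each $\beta^{(i-1)}$ are cleanly delimited. First I would establish a ``charging'' lemma: in any $c$-BLZ parsing $\varphi$ of $\s_G$, each repeated block $\alpha^{(i)}_\gamma$ (a later copy of $\alpha^{(i)}$) can be parsed extremely cheaply --- essentially by a single phrase pointing back to the preceding copy of $\alpha^{(i)}$, paying only one extra hop --- whereas the very first copy, namely the prefix $\s^{(1)}=PXY$ of $\alpha^{(c)}$, cannot be compressed this way because it contains the first occurrences of all the non-separator characters. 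This asymmetry is what forces the combinatorial work to happen on the prefix $\s^{(1)}$.

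Next I would prove the two directions of the equivalence. For the forward direction, given a vertex cover $C$ of size $k$, I would build a $c$-BLZ parsing by: (i) parsing the prefix $\s^{(1)}$ exactly as in the proof of Theorem~\ref{thm:case1} using $C$, contributing $4n+6m+k$ phrases with all hop-numbers $\leq 1$; and (ii) parsing every subsequent copy of an $\alpha^{(i)}$ inside some $\beta^{(i)}$ by pointing its entire body back (as a single source phrase) to the immediately preceding copy of $\alpha^{(i)}$, plus one singleton phrase for the terminating separator. The count of these ``copy'' blocks is precisely $\ell(c-1)$ if one unfolds the recursion $\alpha^{(c)} = \alpha^{(1)}\beta^{(1)}\cdots\beta^{(c-1)}$ and observes that $\beta^{(i)}$ introduces $\ell$ fresh copies of $\alpha^{(i)}$, giving total extra phrases $\ell(c-1)$ and yielding size $4n+6m+k+\ell(c-1)$. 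The hop bound is respected because each copy adds at most one hop to the hop-number it inherits from the previous copy, and after $c-1$ levels of copying plus the base bound of $1$ from $\s^{(1)}$ one stays within $c$; here I would argue the hop-number of any position in $\alpha^{(i)}_\gamma$ is at most one more than that of its source in the previous copy of $\alpha^{(i)}$.

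For the converse I would argue contrapositively and via a counting/exchange argument. Suppose $\varphi$ is a $c$-BLZ parsing of size $\leq 4n+6m+k+\ell(c-1)$. Using the charging lemma I would show that the budget is tight: since each of the $\ell(c-1)$ repeated copies must use at least one phrase for its separator and the separators are distinct, and since an optimal handling of each repeated copy uses at least (copy-source phrase $+$ separator phrase), the number of phrases ``spent'' outside the prefix $\s^{(1)}$ is at least $\ell(c-1)$; consequently $\varphi$ restricted to $\s^{(1)}$ uses at most $4n+6m+k$ phrases. The crucial point, and the \emph{main obstacle}, is to show that the restriction $\varphi(\s^{(1)})$ is forced to behave like a genuine $1$-BLZ parsing of $\s^{(1)}$: a priori a position in $\s^{(1)}$ could have its source pointing \emph{forward} into a later copy, or use a chain of hops that routes through repeated copies, potentially letting $\s^{(1)}$ be parsed more cheaply than any $1$-BLZ parsing allows. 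I would rule this out by noting that $\s^{(1)}$ is a prefix, so sources can only point leftward and hence stay within $\s^{(1)}$, and that within $\s^{(1)}$ all first occurrences of characters reside, so the hop-structure internal to $\s^{(1)}$ is constrained exactly as in Theorem~\ref{thm:case1} --- giving a $1$-BLZ parsing of $\s^{(1)}$ of size $\leq 4n+6m+k$, which by Theorem~\ref{thm:case1} yields a vertex cover of size $\leq k$. Making the hop-accounting across the $c$ nested levels airtight --- in particular verifying that the hop budget $c$ combined with the size budget leaves no slack to cheat on the prefix --- is where the careful bookkeeping will concentrate.
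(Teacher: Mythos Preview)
Your forward direction is essentially the paper's argument, though note a slip in the phrase description: in the LZ76 model a phrase is ``previous substring plus one new character,'' so each block $\alpha^{(i)}_\gamma\#^{(i)}_\gamma$ is a \emph{single} phrase (source $\alpha^{(i)}$, last character $\#^{(i)}_\gamma$), not two. Your count $\ell(c-1)$ is nonetheless correct.

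The converse has a genuine gap. You correctly observe that the $\ell(c-1)$ unique separators force at least $\ell(c-1)$ phrases outside $\s^{(1)}$, hence $|\varphi(\s^{(1)})|\le 4n+6m+k$. But your next step---concluding that $\varphi$ restricted to $\s^{(1)}$ is a $1$-BLZ parsing---does not follow from the facts you cite. That sources point leftward and first occurrences lie in $\s^{(1)}$ only shows the restriction is a valid LZ-parsing; it says nothing about hop-numbers being $\le 1$. A $c$-BLZ parsing of $\s_G$ is perfectly free to have positions of hop $2,3,\dots,c$ already inside $\s^{(1)}$, and Theorem~\ref{thm:case1} then gives you nothing.

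The paper does \emph{not} try to force the restriction to be $1$-BLZ. Instead it runs a propagation argument: if $|\varphi(\alpha^{(1)})|=4n+6m+t$ with $t\le k<\tau(G)$, the analysis inside Theorem~\ref{thm:case1} still yields $z\ge\tau(G)-t$ positions in distinct $Y_j$'s with hop $\ge 2$. These ``bad'' positions recur in every copy $\alpha^{(1)}_\gamma$ inside $\beta^{(1)}$. Because the unique characters $\#^e_{i_j}$ sit between consecutive bad positions, any long phrase spanning two of them must take its source from a segment containing a bad position, pushing its hop to $\ge 3$. To avoid this in all $\ell$ copies you would need at least $\ell+z$ phrases in $\beta^{(1)}$; if you spend fewer, each copy in $\beta^{(1)}$ now carries a hop-$\ge 3$ position. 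This sets up an induction across levels: unless some $\beta^{(j^*)}$ receives $\ge 2\ell$ phrases, the bad hop climbs by one per level and exceeds $c$ at $\beta^{(c-1)}$. Either way the total exceeds $4n+6m+k+\ell(c-1)$ (using $\ell>k$). The ``careful bookkeeping'' you defer is precisely this pigeonhole/propagation mechanism, and it is the actual content of the proof; the reduction-to-$1$-BLZ shortcut you sketch does not go through.
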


We are ready to state the state the main result of this section.
\begin{theorem}
For any integer $c > 1$ the {$c$-BLZ} problem is NP-hard. 
\end{theorem}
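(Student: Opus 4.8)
The plan is to derive the NP-hardness of the $c$-BLZ problem directly from Theorem~\ref{thm:minVC-minParsing}, which already establishes the precise correspondence between vertex covers of $G$ and $c$-BLZ parsings of $\s_G$. First I would recall that {\sc Min-VC} (indeed {\sc Min-VC-4}) is NP-hard, so its decision version---given a graph $G$ and integer $k$, does $G$ have a vertex cover of size at most $k$?---is NP-complete. The reduction in Section~\ref{sec:reduction} takes a graph $G=(V,E)$ and an integer $k$ and produces the string $\s_G = \alpha^{(c)}$ together with the target parsing size $4n+6m+k+\ell(c-1)$, where $n=|V|$, $m=|E|$, and $\ell$ is any fixed integer with $\ell > k$ (e.g.\ $\ell = n+1$).

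The key step is to verify that this reduction is a valid polynomial-time many-one reduction from the decision version of {\sc Min-VC} to the decision version of {\sc $c$-BLZ} (given $\s$ and a target $t$, is there a $c$-BLZ parsing of $\s$ of size at most $t$?). Correctness of the reduction is exactly the content of Theorem~\ref{thm:minVC-minParsing}: $G$ has a vertex cover of size $k$ if and only if $\s_G$ has a $c$-BLZ parsing of size $4n+6m+k+\ell(c-1)$. To turn the ``exactly size $k$'' statement into an ``at most'' statement I would observe that a vertex cover of size $k' \le k$ yields, by monotonicity, a vertex cover of size $k$ (adding arbitrary vertices), and symmetrically the parsing sizes increase by exactly one per extra selected vertex; hence $G$ has a vertex cover of size at most $k$ iff $\s_G$ has a $c$-BLZ parsing of size at most $4n+6m+k+\ell(c-1)$. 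Since $c$ is a fixed constant, the alphabet of $\s_G$ has size $O(m+n)$ and, crucially, the length of $\s_G$ is polynomial in $n+m$: each level of the recursion multiplies the length by a factor of $\ell+1 = O(n)$, and there are $c-1 = O(1)$ levels, so $|\s_G| = O\big((n+m)\cdot(\ell+1)^{c-1}\big) = (n+m)^{O(1)}$. Thus $\s_G$ and the target value are computable from $(G,k)$ in polynomial time.

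The main obstacle---and the reason $c$ being constant matters---is precisely the size blow-up of $\s_G$. Because the construction is recursive with $\beta^{(i)}$ repeating $\alpha^{(i)}$ a total of $\ell$ times, the string length grows multiplicatively with each level, so $|\s_G| = \Theta\big((\ell+1)^{c-1}|\s^{(1)}|\big)$. This is polynomial only when $c$ is treated as a fixed constant rather than part of the input; for constant $c$ it is polynomial, which is exactly what the theorem asserts. I would make this bound explicit to confirm the reduction runs in polynomial time.

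Putting these pieces together, the proof reads: fix the constant $c>1$; given an instance $(G,k)$ of the (NP-complete) {\sc Min-VC} decision problem, compute in polynomial time the string $\s_G$ and the integer $t = 4n+6m+k+\ell(c-1)$; by Theorem~\ref{thm:minVC-minParsing} (together with the monotonicity argument above) $(G,k)$ is a yes-instance iff $(\s_G, t)$ is a yes-instance of the $c$-BLZ decision problem. Hence the $c$-BLZ decision problem is NP-hard, and therefore so is the optimization problem {\sc $c$-BLZ} of computing a minimum-size $c$-BLZ parsing. Since this holds for the fixed-parameter version for every constant $c>1$, it in particular implies the NP-hardness of the general {\sc BLZ} problem stated in the introduction, completing the proof. \qed
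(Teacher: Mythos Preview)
Your proposal is correct and follows essentially the same approach as the paper: invoke Theorem~\ref{thm:minVC-minParsing} to see that the map $(G,k)\mapsto(\s_G,\,4n+6m+k+\ell(c-1))$ is a polynomial-time many-one reduction from the decision version of {\sc Min-VC} to the decision version of {\sc $c$-BLZ}, and conclude NP-hardness. You add two details the paper leaves implicit, namely the explicit bound $|\s_G|=O\big((n+m)(\ell+1)^{c-1}\big)$ certifying that the reduction is polynomial for constant $c$, and the monotonicity observation turning ``size exactly $k$'' into ``size at most $k$''; both are welcome clarifications rather than a different route.
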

\begin{proof}
The problem of deciding for a given graph $G$ and a given $k$ whether $G$ has a vertex cover of size $k$ is well known to be NP-hard (see, e.g., \cite{CHLEBIK2006320}). By Theorem \ref{thm:minVC-minParsing} the polynomial time function that
maps a graph $G$ and integers $k$ and $\ell > k$ to the string $\s_G$ is a polynomial time reduction between the problem of deciding whether $G$ has a vertex cover of size $k$ and the problem  of deciding whether $\s_G$ has a $c$-BLZ parsing of size $4n+6m+(c-1)\ell.$ It follows that the latter problem is also NP-hard.
\end{proof}

\section{APX Hardness}
In this section we focus on the existence of polynomial time approximation algorithms for the problem of finding an optimal $c$-BLZ parsing of a string. We show that the problem is $APX$-hard, hence  
there exists a constant $\kappa$ such that no polynomial time
$\kappa$-approximation algorithm can exists for the {\sc BLZ} problem 
unless $P = NP.$

For the proof of APX hardness we will employ the concept of an L-reduction: 
Let $\mathbb{A}, \mathbb{B}$ be two minimization problems. 
$\mathbb{A}$ is said to be  L-reducible to $\mathbb{B},$ denoted by 
$\mathbb{A} \leq_L \mathbb{B},$ if two polynomial time computable functions $f$, $g$ and two constants $a$, $b$ exist such that for any 
instance $\chi$ of $\mathbb{A}$:
\begin{enumerate}
\item $f(\chi)$ is an instance of $\mathbb{B}$
\item $OPT_{\mathbb{B}}(f(\chi)) \leq a \, OPT_{\mathbb{A}}(\chi)$
\item for any solution $s_{\mathbb{B}}(f(\chi))$ to the instance $f(\chi)$ of $\mathbb{B},$ $g(s_{\mathbb{B}}(f(\chi)))$ is a 
solution to the instance $\chi$ of $\mathbb{A}.$
\item for any solution $s_{\mathbb{B}}(f(\chi))$ to the instance $f(\chi)$ of $\mathbb{B},$ it holds that
$$ cost(g(s_{\mathbb{B}}(f(\chi)))) - OPT_{\mathbb{A}}(\chi) \leq 
b \left( cost(s_{\mathbb{B}}(f(\chi)))) - OPT_{\mathbb{B}}(f(\chi)) \right), $$
\end{enumerate}
where $OPT_{\mathbb{P}}(x)$ denotes the value of the optimal solution to instance $x$ of problem $\mathbb{P}$ and 
$cost_{\mathbb{P}}(s)$ denotes the cost in problem $\mathbb{P}$ of a solution $s.$
When conditions $1$-$4$ are satisfied, the quadruple $(f,g,a,b)$ is  an L-reduction from  $\mathbb{A}$ a to $ \mathbb{B}$.

\begin{theorem}
The problem $c$-{\sc BLZ} is APX-hard.
\end{theorem}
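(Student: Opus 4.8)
\noindent
The plan is to give an L-reduction $(f,g,a,b)$ from $\mathbb{A} = $ \textsc{Min-VC-4}, which is known to be APX-hard (see \cite{CHLEBIK2006320}), to $\mathbb{B} = c$-\textsc{BLZ}. For $f$ I reuse verbatim the construction of Section~\ref{sec:reduction}: given a $4$-regular graph $G=(V,E)$ with $n=|V|$, $m=|E|$, I set $f(G)=\s_G$ and fix the free parameter to $\ell := n+1$, so that $\ell > k$ holds for every candidate cover size $k \le n$ and, since $c$ is constant, $|\s_G| = O(n^{c})$ stays polynomial (the recursion multiplies the length by a factor $O(\ell)$ per layer, over $c-1$ layers). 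By Theorem~\ref{thm:minVC-minParsing} the two optima are then
\begin{equation*}
OPT_{\mathbb{B}}(\s_G) = 4n + 6m + \tau(G) + \ell(c-1),
\qquad
OPT_{\mathbb{A}}(G) = \tau(G).
\end{equation*}

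Condition~2 follows from $4$-regularity. The handshake identity gives $m = 2n$, and since every vertex covers at most $4$ edges we get $\tau(G) \ge m/4 = n/2$, i.e.\ $n \le 2\tau(G)$. Substituting $m = 2n$ and $\ell = n+1$ and treating $c$ as a constant, $OPT_{\mathbb{B}}(\s_G) = 16n + \tau(G) + (n+1)(c-1) = O(\tau(G))$, so $OPT_{\mathbb{B}}(\s_G) \le a\,OPT_{\mathbb{A}}(G)$ for a constant $a = a(c)$ (here I use $\tau(G)\ge 1$ to absorb the additive term). For $g$, given any $c$-BLZ parsing $\varphi$ of $\s_G$, I read off the set $C$ of vertices that are \emph{selected} by $\varphi$ (in the sense of Theorem~\ref{thm:case1}, extracted from the first copy $\s^{(1)}$, equivalently the set of $v_i$ whose block $X_i$ is parsed into $3$ phrases) and, if $C$ fails to be a cover, I repair it to a cover $C'$ by adding one endpoint of each edge left uncovered by $C$. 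This $g$ is polynomial and always returns a feasible vertex cover, which is condition~3.

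For condition~4, let $u$ be the number of edges not covered by $C$. The repair gives $|C'| \le |C| + u$, hence $|C'| - \tau(G) \le (|C|-\tau(G)) + u$. The crux is a \emph{robust} phrase-count lower bound, valid for \emph{every} $c$-BLZ parsing and not merely the optimal one,
\begin{equation*}
|\varphi| \;\ge\; 4n + 6m + |C| + u + \ell(c-1) \;=\; OPT_{\mathbb{B}}(\s_G) + (|C|-\tau(G)) + u,
\end{equation*}
obtained by summing the $|\varphi(P)| = 2n+2m$, $|\varphi(X)| = 2n+|C|$ and $|\varphi(Y)| \ge 4m+u$ contributions inside the first copy $\s^{(1)}$ with an $\ell(c-1)$ contribution from the recursive $\beta$-layers. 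Chaining the two displays yields $|C'| - \tau(G) \le |\varphi| - OPT_{\mathbb{B}}(\s_G)$, i.e.\ condition~4 with $b = 1$. Since L-reductions preserve APX-hardness and \textsc{Min-VC-4} is APX-hard, $c$-\textsc{BLZ} is APX-hard as well.

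The step I expect to be the main obstacle is the robust lower bound in the last display. Theorems~\ref{thm:case1} and~\ref{thm:minVC-minParsing} only characterise the optimum, whereas here the bound must hold for arbitrary, possibly very wasteful, parsings, with $C$ and $u$ read off from that same parsing. Two points need care. First, I must show that the dichotomy of Claim~3 --- ``$Y_i$ costs $4$ phrases iff an endpoint of $e_i$ is selected'' --- survives the relaxation from hop-bound $1$ to hop-bound $c$; this is exactly where the nested $\alpha^{(i)}/\beta^{(i)}$ construction is needed, since a copy in $\beta^{(c-1)}$ derives its characters from $\s^{(1)}$ through a chain of $c-1$ references, so keeping the overall hop-count $\le c$ forces the referenced positions of $\s^{(1)}$ to have hop $\le 1$, reinstating the $1$-BLZ analysis on $\s^{(1)}$. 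Second, I must argue that any deviation from the canonical per-layer handling of the $\beta^{(i)}$ blocks can only increase the phrase count, so the $\ell(c-1)$ summand is a genuine lower bound independent of how $\s^{(1)}$ is parsed; this is precisely where $\ell > \tau(G)$ is used, to guarantee that trading phrases between $\s^{(1)}$ and the $\beta$-layers (i.e.\ parsing $\s^{(1)}$ with extra hops to save phrases, at the cost of losing the reference chain in each of the $\ell$ copies) is never profitable. I would prove both by adapting, to the suboptimal regime, the inductive argument already developed for Theorem~\ref{thm:minVC-minParsing}.
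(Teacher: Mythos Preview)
Your approach is essentially the paper's: the same L-reduction from \textsc{Min-VC-4}, the same $f$ (with the cosmetic difference $\ell=n+1$ versus the paper's $\ell=n$), an equivalent $g$ (the paper phrases the repair step as ``add an endpoint for every $Y_i$ that is either parsed into $\ge 5$ phrases or into $4$ phrases with a hop-$2$ position'', which by Claim~3 of Theorem~\ref{thm:case1} is exactly your set of uncovered edges), and the same constants $a=O(c)$, $b=1$.

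One point of presentation deserves tightening. Your displayed decomposition asserts $|\varphi(Y)|\ge 4m+u$ as a standalone inequality and then adds $\ell(c-1)$ separately from the $\beta$-layers; for $c>1$ this is literally false, since an uncovered $Y_i$ \emph{can} be parsed into $4$ phrases at the cost of a hop-$2$ position. The correct accounting (which the paper also leaves to the appendix machinery behind Theorem~\ref{thm:minVC-minParsing}) is a trade-off: each hop-$2$ position created in $\s^{(1)}$ forces, through the chain of $\alpha^{(i)}/\beta^{(i)}$ copies, an extra phrase somewhere among the $\beta$-layers, so the ``missing'' $+u$ in $|\varphi(Y)|$ reappears in $\sum_i|\varphi(\beta^{(i)})|$. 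Your sentence ``keeping the overall hop-count $\le c$ forces the referenced positions of $\s^{(1)}$ to have hop $\le 1$'' overstates this; the $\beta$-layers do not force $\s^{(1)}$ to be $1$-BLZ, they only make each violation cost one phrase elsewhere. You clearly anticipate this in your final paragraph, and the fix is exactly what you say: rerun the inductive argument of Theorem~\ref{thm:minVC-minParsing} (Lemmas~\ref{lemma:LB2-alpha1}--\ref{lemma:noVC-noParsing}) in the quantitative form $|\varphi|\ge 4n+6m+|C|+u+\ell(c-1)$ rather than merely ``$>4n+6m+k+\ell(c-1)$''.
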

\begin{proof}
We prove that there exists an L-reduction $(f,g,a,b)$ from {\sc Min-VC-$4$}, the problem of finding a minimum vertex cover in a $4$-regular graph $G=(V,E)$ to 
the minimization problem c-BLZ on the string $s_G$.
Since {\sc Min-VC-$4$} is APX-hard \cite{CHLEBIK2006320}, 
the existence of the L-reduction implies that c-BLZ is also APX-hard \cite{612321} (also see \cite{Alimonti-Kann97,PAPADIMITRIOU1991425}.\\

The function $f$, that associate instances of the vertex cover problem to instances of c-BLZ problem, in this L-reduction is going to be the same one used for the proof of NP-hardness.\\
Since we are working with $4$-regular graph we know that $m=|E|=2|V|=2n $ and  $\frac{n}{2} \leq \tau (G),$ 
where $\tau (G)$ is the size of a  minimum vertex cover of $G$.

Let $k^* = \tau(G) = OPT_{\sc VC}(G).$ and $\varphi^*$ denote a $c$-BLZ parsing of $\s_G$ of minimum size, i.e., $|\varphi^*| = OPT_{BLZ}(\s_G, c).$ 
By Theorem \ref{thm:minVC-minParsing}, with $\ell = n > k^*,$ we have 
$$|\varphi^*| = 4n + 6m +k^*+\ell(c-1) = 16 n+k^*+\ell(c-1). $$
This holds for each $\ell \geq k$ where $k $ is the vertex cover's size. In particular, it holds for $\ell = n$. 
Using $\ell = n$ and the property of $|\tau(G)| \geq \frac{n}{2}$ from 
the previous inequality  we get
$\displaystyle{|\varphi^*| = 15 n + nc + k^* \leq 30 k^* + k^* +2 k^*c = (31+ 2c)k^*.}$
Since $c$ is a constant, setting $a = (31+2c),$ we have that 
our reduction satisfies the first two items in the definition of an $L$-reduction.

We define the function $g$ that maps a parsing  $\varphi$ of the string $s_G$ to a vertex cover $C_\varphi$ of $G$ by providing the following algorithm to compute it:  start with $C_\varphi$ being the empty set; \\
Given $\varphi$, parsing of the the string $s_G$, 
for each $X_i$ of $\alpha^{(1)},$ if the parsing $\varphi$ selects $X_i,$ i.e., it parses it into 3 phrases, then $v_i$ is added to $C_\varphi$. Let $d$ be the number of $X_i$ selected in $\varphi,$ hence at this point we have $|C_{\varphi}| = d$.\\
If $C_{\varphi}$ is a vertex cover for $G$ then then the computation of $g$ terminates. Otherwise, we look for every $Y_i$ in $\alpha^{(1)}$ which is either  parsed into $\geq 5$ phrases or it includes a position $x$ with $hop(x)=2$ and $4$ phrases. We refer to the  number of $Y_i$s in $\varphi$ that have such characteristic as $p$. For each one of these $Y_i,$ let $e_i$ be the corresponding edge of $G$ and add to $C_\varphi$ one of the vertices incident to $e_i.$ The resulting $C_\varphi$ is a vertex cover for $G$, as we know that each $Y_i$ not considered in the last step is parsed into 4 phrases and does not have a position with hop-number $> 1.$ 
Let the edge of $G$ corresponding to $Y_i$ be $e_i = \{v_p, v_q\}.$
Then by Claim 2 in Theorem \ref{thm:case1}, one of $v_p$ and $v_q$ is selected by $\varphi,$ hence in $C_{\varphi}.$ 

Then, we set $g(\varphi) = C_\varphi.$ 

Since $C_\varphi$ is a vertex cover for $G$ then there exists a parsing $\varphi '$ that: selects each vertex $v_i \in C_\varphi$, parses each $Y_i$ in $\alpha^{(1)}$ using exactly $4$ phrases and parses the remaining string: $\beta^{(1)}, \dots \beta^{(c-1)}$  with $\ell(c-1)$ additional phrases.

The number of phrases used in this parsing satisfies 
$$ |\varphi'| = |C_\varphi| + 4n + 6m + \ell(c-1). $$

Since $|C_\varphi| \leq d+p$ we have that   
%\begin{equation} \label{1}
$|\varphi| \geq 4n + 6m + d + p + \ell(c-1) \geq  \varphi'.$
%\end{equation}
%Since : $|C_\varphi| \leq d+p,$ we can claim:
%$$ |C_\varphi| \leq |\varphi| - (4n + 6m+ \ell(c-%1)),$$
%and:
From Theorem \ref{thm:minVC-minParsing}, we also have 
%\begin{equation}  \label{2}
$|\tau(G)| = |\varphi^*| - (4n + 6m+ \ell(c-1)).$
%\end{equation}
%From (\ref{1}) and (\ref{2}) 
Then, we have
$\displaystyle{|C_\varphi| -|\tau(G)|\leq |\varphi| -|\varphi^*|, }$
which implies that the function $g$ defined above  and $b=1$ satisfy property 3 and 4 of an $L$-reduction.

We conclude that the quadruple $(f, g, a, b)$
is the desired $L$-reduction from {\sc Min-VC-$4$} to {\sc $c$-BLZ}. It follows that the latter problem is APX-hard. 
\qed
\end{proof}

\section{On the approximation of BLZ to LZ76}
For a string $\s,$ we denote by $OPT_{LZ}(s)$ the size of an optimal LZ parsing for $\s$ and by $OPT_{c-BLZ}(\s)$ 
the size of an optimal $c$-BLZ parsing for $\s.$

Trivially we have $OPT_{LZ}(s) \leq OPT_{c-BLZ}(\s).$ Before our results, $OPT_{LZ}(s)$ was the only known and used lower bound on the size of
$OPT_{c-BLZ}(\s)$\cite{LMN24,bannai2024height}.

In this section we study the ratio between the size of an  optimal LZ parsing and the size of an optimal $c$-BLZ parsing of the same string $\s.$ We will show that in the worst case the ratio is unbounded, by exhibiting a class of strings over a ternary alphabet for which the 
optimal LZ parsing is known to be logarithmic in the size of the string while the optimal $c$-BLZ parsing is $\Omega(n^{\frac{1}{c+1})}.$ 

Remarkably, the latter result provide the first non trivial lower bound on the optimal $c$-BLZ parsing of a string which might be significantly larger than the size of the LZ-parsing. 
More precisely, for the size of an optimal $c$-BLZ parsing of a square free string we prove the following lower bound. 

\begin{lemma} \label{lemma:bound} 
Let's fix an integer $c \geq 1$ and a \textit{square free} string $\s$, then for every $c$-BLZ parsing $\varphi$ of $s$ it holds that
$\displaystyle{| \varphi(\s)| \geq \sqrt[c+1]{|\s|} -1.}$
\end{lemma}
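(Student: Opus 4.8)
The plan is to prove the equivalent bound $|\s| < (t+1)^{c+1}$, where $t = |\varphi(\s)|$ is the number of phrases; taking $(c+1)$-th roots then gives $t > \sqrt[c+1]{|\s|}-1$, which yields the claimed inequality. The quantity I would track is
$$L_i = \max\{\, b-a+1 : \s[a,b] \text{ is a substring of } \s \text{ with } hop^{\varphi}(p)\le i \text{ for all } p\in[a,b]\,\},$$
the length of the longest window of $\s$ all of whose positions are decodable within $i$ hops. Since $\varphi$ is a $c$-BLZ parsing, every position has hop-number at most $c$, so all of $\s$ is such a window and $|\s|\le L_c$. The goal is the recursion $L_i \le t + (t+1)L_{i-1}$ together with the base case $L_0\le t$, from which a one-line induction gives $L_i \le (t+1)^{i+1}-1$, hence $|\s|\le L_c \le (t+1)^{c+1}-1 < (t+1)^{c+1}$.

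The key structural observation, and the only place where square-freeness enters, is that no phrase's source overlaps the phrase itself. Indeed, by the remark following the definition of LZ-parsing, an overlapping source forces the phrase to be a power of a block of length $p = i_{j-1}-\ell+1$ spanning at least $2p$ consecutive characters, which contains the square $(\s[\ell,\ell+p-1])^2$, contradicting square-freeness. Consequently, for every non-singleton phrase $\varphi_j$ the proper prefix $\s[i_{j-1}+1,i_j-1]$ is a faithful copy of a contiguous substring occurring strictly earlier, and the definition of $hop^{\varphi}$ gives that each position of the prefix has hop-number exactly one more than the corresponding source position.

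I would then establish the recursion as follows. Fix a window $\s[a,b]$ of length $L_i$ with all hop-numbers $\le i$. Its hop-$0$ positions are precisely phrase endings, so there are at most $t$ of them, say $d\le t$; deleting them splits $\s[a,b]$ into at most $d+1\le t+1$ maximal runs of positions with hop-number in $\{1,\dots,i\}$. Each such run is a contiguous piece of the proper prefix of a single phrase, hence --- by the previous paragraph --- a copy of a contiguous substring lying earlier in $\s$ whose positions all have hop-number at most $i-1$, so its length is at most $L_{i-1}$. Summing, $L_i \le d + (d+1)L_{i-1} \le t + (t+1)L_{i-1}$. The base case $L_0\le t$ holds since there are only $t$ hop-$0$ positions in total. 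Finally, $L_0 \le (t+1)-1$, and assuming $L_{i-1}\le (t+1)^{i}-1$ we get $L_i \le t + (t+1)\bigl((t+1)^{i}-1\bigr) = (t+1)^{i+1}-1$; instantiating at $i=c$ completes the argument.

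The part I expect to require the most care is the run decomposition: I must verify that a maximal run of hop-$\ge 1$ positions lies inside the proper prefix of exactly one phrase (so that it is one clean copy rather than a concatenation of pieces copied from different sources), and that windows beginning or ending in the middle of a phrase only produce partial prefixes, which are still bounded by $L_{i-1}$. The overlap exclusion above is exactly what licenses the ``copy of an earlier substring'' step; without square-freeness a periodic phrase, as in the parsing $a\mid a^{n-1}$ of $a^n$, gives $t=2$ and $c=1$ yet $|\s|=n$ unbounded, so this is genuinely where the hypothesis is spent.
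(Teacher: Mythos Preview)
Your proof is correct and rests on the same two ingredients as the paper's: square-freeness forbids source--phrase overlap, so passing to a source drops every hop-number by exactly one; and a window of bounded hop-number is sliced by the at most $t$ phrase-endings into pieces each contained in a single phrase's proper prefix. The paper, however, packages this as a proof by contradiction: assuming $t<\sqrt[c+1]{n}-1$, it picks a single phrase $\phi^{(1)}$ of length $>n^{c/(c+1)}$, then follows one descending chain $\phi^{(1)}\to\tilde\phi^{(1)}\to\phi^{(2)}\to\cdots$ by always retaining the \emph{largest} piece of the current source that lies in one phrase, showing inductively that $|\tilde\phi^{(c)}|>n^{1/(c+1)}$ while all its positions must have hop $0$ --- hence $|\tilde\phi^{(c)}|\le t$, a contradiction. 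Your formulation via the quantities $L_i$ is the direct (``bottom-up'') dual: instead of tracking one chain you bound the maximum window length at each hop level, obtaining the clean recursion $L_i\le t+(t+1)L_{i-1}$ with closed form $(t+1)^{i+1}-1$. Both routes give the stated bound; yours avoids the $+1$ bookkeeping and is arguably tidier, while the paper's chain argument makes the ``follow one long copy all the way down'' intuition more explicit.
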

\begin{proof}
We start by observing that if the string is square free then the source of a phrase cannot overlap with it.  

We argue by contradiction. We assume $|\varphi(\s)| < \sqrt[c+1]{n} -1.$ Then there exists a phrase $\phi^{(1)},$ such that: 
\begin{equation} \label{fi-1}
 |\phi^{(1)}| > \frac{n}{\sqrt[c+1]{n} -1} > n^{\frac{c}{c+1}} +1.
\end{equation}
For $i = 1, \dots, c,$ we define strings $\phi^{(i)}$ and $\Tilde{\phi}^{(i)}$ as follows: (i) $\Tilde{\phi}^{(i)}$ is the source of $\phi^{(i)}$; 
and $\phi^{(i+1)}$ is the largest substring of $\Tilde{\phi}^{(i)}$ which is contained in a phrase of $\varphi.$
Then, inductively we can show that for each $i=1, \dots, c$ we have 
$$ |\phi^{(1)}| > n^{\frac{c}{c+1}} +1, \qquad \mbox{hence} \qquad |\Tilde{\phi}^{(1)}|>n^{\frac{c}{c+1}}.$$
\noindent
{\em Base case $i=1.$} From (\ref{fi-1}) we immediately also have $\displaystyle{|\Tilde{\phi}^{(1)}|>n^{\frac{c}{c+1}}.}$

\smallskip

\noindent
{\em Inductive step $i>1.$} Since, by the induction hypothesis on $i-1$, we have $\displaystyle{|\phi^{(i)}| \geq \frac{\Tilde{\phi}^{(i-1)}}{\sqrt[c+1]{n} -1} > \frac{n^{\frac{c-(i-1)+1}{c+1}} +1}{n^ \frac{1}{c}-1}> n^{\frac{c-i+1}{c+1}} +1,}$
it holds that:
$$|\phi^{(i)}| > n^{\frac{c-i+1}{c+1}} +1 \quad \mbox{hence } |\Tilde{\phi}^{(i)}|>n^{\frac{c-i+1}{c+1}}.$$
For each $i,$ the maximum hop-number of the positions of $\phi^{(i)}$ is $\leq c-i+1$ therefore for $i=c,$ we have that the maximum hop-number of the positions of $\phi^{(c)}$ is $\leq 1.$ Then, the 
sources of these positions (i.e., the positions of $\Tilde{\phi}^{(c)}$) must be consecutive positions whose hop-number is 0. This can only be the case if they are the positions of single character phrases. Hence: $$|\varphi| \geq |\Tilde{\phi}^{(c)}| > n^{\frac{c-c+1}{c+1}} = n^{\frac{1}{c+1}}.$$
This contradicts the initial (absurdum) hypothesis and completes the proof. 
\qed
\end{proof}

From the results in \cite{constantinescu2007lempel,leech19572726} we have the following upper bound on $OPT_{LZ}(\s).$ %the size of an optimal LZ parsing.
\begin{fact} \label{fact:leech}
There are infinitely many string $s {\in \{1,2,3\}}^*$, which are \textit{square free} and such that:
$OPT_{LZ}(s) = O(\log{|s|}).$
\end{fact}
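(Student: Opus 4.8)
\emph{Proof proposal.} The plan is to exhibit the family explicitly, as the iterates of a single square-free morphism, and then to bound the LZ parsing by exploiting the self-similarity these iterates inherit. First I would invoke the classical constructions of Thue and Leech \cite{leech19572726}: there is a morphism $\mu:\{1,2,3\}^*\to\{1,2,3\}^*$ with $|\mu(a)|\ge 2$ for every letter $a$ that is \emph{square-free}, i.e.\ it maps every square-free word to a square-free word. Starting from the (trivially square-free) letter $1$ and setting $w_t=\mu^t(1)$, square-free preservation guarantees that every $w_t$ is square-free, while $2^{\,t}\le |w_t|\le K^{\,t}$ for $K=\max_a|\mu(a)|$, so the $w_t$ have strictly increasing lengths and $t=\Theta(\log|w_t|)$. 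This already produces infinitely many square-free strings over $\{1,2,3\}$, and it remains only to bound $OPT_{LZ}(w_t)$.

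For the bound I would pass through grammar complexity, since the identity $w_t=\mu(w_{t-1})$ makes $w_t$ highly self-similar. Introduce one nonterminal $A^{(a)}_i$ for each letter $a\in\{1,2,3\}$ and each level $i=0,1,\dots,t$, with productions $A^{(a)}_0\to a$ and $A^{(a)}_i\to A^{(b_1)}_{i-1}\cdots A^{(b_r)}_{i-1}$ whenever $\mu(a)=b_1\cdots b_r$. Then $A^{(1)}_t$ derives exactly $w_t$, and the total size of the grammar is $O\!\left(t\cdot\sum_a|\mu(a)|\right)=O(t)$. Invoking the classical inequality that the number of phrases of an LZ parsing never exceeds the size of any grammar generating the string, I conclude $OPT_{LZ}(w_t)=O(t)=O(\log|w_t|)$, as required.

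The main obstacle is not the LZ estimate, which is almost immediate once the grammar is in hand, but the purely combinatorial guarantee that a morphism over only three letters can preserve square-freeness at all; this is the nontrivial ingredient supplied by Leech \cite{leech19572726}, verified through a finite criterion on short words, which is why I would cite it rather than re-establish it here. A secondary point is to reconcile the LZ76 parsing of this paper---where each non-singleton phrase is a previous factor \emph{plus one fresh character}---with the plain longest-previous-factor parsing underlying the grammar inequality; the two differ by at most a constant factor per phrase, so the $O(\log|w_t|)$ bound transfers to $OPT_{LZ}$ as defined here. This recovers the logarithmic LZ complexity of morphic fixed points established in \cite{constantinescu2007lempel}.
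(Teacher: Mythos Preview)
Your sketch is correct, but note that the paper itself offers no proof of this fact: it is stated as a direct consequence of the cited works \cite{constantinescu2007lempel,leech19572726} and left at that. So there is nothing to compare against on the paper's side beyond the two citations.

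What you wrote is precisely a reconstruction of what those two references supply. Leech \cite{leech19572726} gives the square-free morphism on $\{1,2,3\}$ (in fact a uniform one), and Constantinescu--Ilie \cite{constantinescu2007lempel} establish that iterates of a morphism have LZ factorisation of size $O(t)=O(\log|w_t|)$. Your detour through the straight-line grammar and the classical inequality $z\le g$ is a clean alternative to their direct analysis of how the LZ factorisation of $\mu(w)$ relates to that of $w$; both routes give the same $O(\log n)$ bound, and yours is arguably more transparent. Your closing remark about the constant-factor discrepancy between the ``previous factor'' and ``previous factor plus one character'' variants of LZ is also correct and worth keeping, since the paper's $OPT_{LZ}$ is the latter.
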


\begin{theorem}
Fix an integer $c\geq 1.$ Then, the worst case ratio between 
$OPT_{c-BLZ}(\s)$ and $OPT_{LZ}(\s)$ over all ternary strings $\s$ is unbounded, i.e., 
$$  
\max_{\s \in {\{1,2,3\}}^*}  \frac{OPT_{c-BLZ}(\s)}{OPT_{LZ}(\s)}
\to \infty.$$

\end{theorem}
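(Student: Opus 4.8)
The plan is to combine the two results already established in this section: the lower bound on the size of any $c$-BLZ parsing of a square-free string (Lemma~\ref{lemma:bound}) and the existence of square-free ternary strings with logarithmic LZ-parsing (Fact~\ref{fact:leech}). The key observation is that these two facts apply to the \emph{same} class of strings, since the strings guaranteed by Fact~\ref{fact:leech} are square-free and hence eligible for the lower bound of Lemma~\ref{lemma:bound}.

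First I would invoke Fact~\ref{fact:leech} to obtain an infinite family of square-free strings $\s^{(1)}, \s^{(2)}, \dots$ over $\{1,2,3\}$ with $|\s^{(j)}| \to \infty$ and $OPT_{LZ}(\s^{(j)}) = O(\log |\s^{(j)}|)$. Next, since each such $\s^{(j)}$ is square-free, Lemma~\ref{lemma:bound} applies to any $c$-BLZ parsing of $\s^{(j)}$; taking the minimum over all $c$-BLZ parsings, we obtain
$$ OPT_{c\text{-}BLZ}(\s^{(j)}) \geq |\s^{(j)}|^{\frac{1}{c+1}} - 1. $$
Writing $n_j = |\s^{(j)}|$, the ratio is then bounded below by
$$ \frac{OPT_{c\text{-}BLZ}(\s^{(j)})}{OPT_{LZ}(\s^{(j)})} \geq \frac{n_j^{\frac{1}{c+1}} - 1}{O(\log n_j)}. $$

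The final step is the asymptotic analysis: for fixed $c$, the numerator grows polynomially in $n_j$ (as $\Theta(n_j^{1/(c+1)})$), while the denominator grows only logarithmically, so the ratio tends to infinity as $j \to \infty$. This establishes that the supremum over all ternary strings is unbounded, which is exactly the claimed statement. I do not anticipate a genuine obstacle here, since both ingredients are already in hand; the only point requiring a small amount of care is confirming that the strings from Fact~\ref{fact:leech} indeed satisfy the hypotheses of Lemma~\ref{lemma:bound} (square-freeness), which is explicitly stated in both, and ensuring the denominator's $O(\log n)$ bound does not hide a constant that interferes with the limit — it cannot, since any polynomial factor dominates any logarithmic factor regardless of the constants involved.
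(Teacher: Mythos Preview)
Your proposal is correct and follows essentially the same approach as the paper: invoke Fact~\ref{fact:leech} to get square-free ternary strings with $OPT_{LZ}(\s)=O(\log|\s|)$, apply Lemma~\ref{lemma:bound} to lower-bound $OPT_{c\text{-}BLZ}(\s)$ by $|\s|^{1/(c+1)}-1$, and conclude that the ratio tends to infinity. The paper's proof is just a terser version of exactly this argument.
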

\begin{proof}
Let $\mathcal{S}$ be the set of square free ternary string. Then, by Fact \ref{fact:leech} and Lemma \ref{lemma:bound}, there exists a constant $a$ such that:
$$  
\max_{\s \in \mathcal{S}}  \frac{OPT_{c-BLZ}(\s)}{OPT_{LZ}(\s)}\geq \frac{|\s|^{\frac{1}{c+1}}-1}{a \log|\s|} {\to} \infty.$$
\qed
\end{proof}
\noindent
{\em Remark: Since there exist $\s$ s.t. $OPT_{LZ}(\s) \geq |\s|^{\frac{1}{c+1}}$ by Lemma \ref{lemma:bound} we obtain the following general lower bound % 
$OPT_{c-BLZ}(\s) \geq \max\{OPT_{LZ}(\s), |\s|^{\frac{1}{c+1}}\}.$}

\section{Conclusion and Open Problems}
We studied 
{\sc $c$-BLZ}, a variant of the LZ76,  that allows to decompress characters of the text in a bounded number of accesses to the encoding.
, i.e., without needing to decompress the whole text.
We proved that for any constant $c$ computing the optimal parsing that guarantees decompression of a character with at most $c$  accesses is NP-hard and also APX hard. We also showed that the ratio to the size of the optimal LZ76 parsing is unbounded in the worst case, providing a first non-trivial lower bound on the size of the optimal $c$-BLZ parsing. 

A main direction for future research is the investigation of approximation algorithms as well as parameterized algorithms for bounded parameters of practical importance, e.g., alphabet size. The unbounded ratio to the optimal LZ76 parsing leaves open the question of lower bounds on $OPT_{c-BLZ}.$

\newpage

 \bibliographystyle{splncs04}
 \bibliography{BLZc}

\newpage

\appendix

\section{The proof of Theorem \ref{thm:minVC-minParsing}}

{\bf Theorem \ref{thm:minVC-minParsing}.}
{\em Fix a graph $G=(V,E)$ and let $n = |V|, m = |E|.$ Fix integers $c \geq 1$ and $0 \leq k \leq n,$ and $\ell > k.$
Let $\s_G$ be the string produced by the procedure in section \ref{sec:reduction}. Then 
$G$ has a vertex cover of size $k$ if and only if there is a $c$-BLZ parsing of $\s_G$ of size $4n+ 6m + k + \ell(c-1).$
}

The proof is based on several lemmas.

\begin{lemma} \label{lemma:VC2Parsing}
If $G$ has a vertex cover of size $k$ then there exists a $c$-BLZ parsing of $\s$ of size $4n+6m+k+ (c-1)\ell.$ 
\end{lemma}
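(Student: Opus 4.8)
The plan is to build the desired $c$-BLZ parsing of $\s = \alpha^{(c)}$ by starting from the parsing of the prefix $\alpha^{(1)} = \s^{(1)}$ supplied by Theorem~\ref{thm:case1} and then appending a fixed, cheap parsing for each block $\beta^{(i)}$. Concretely, given a vertex cover $C$ of $G$ with $|C| = k$, the forward direction of Theorem~\ref{thm:case1} yields a $1$-BLZ parsing $\varphi_C$ of $\alpha^{(1)}$ of size $4n+6m+k$ in which every position has hop-number at most $1$. I would keep this parsing untouched on the prefix $\alpha^{(1)}$ and only need to specify how to parse the remaining suffix $\beta^{(1)} \beta^{(2)} \cdots \beta^{(c-1)}$.

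For each level $i$ I would parse $\beta^{(i)} = \alpha^{(i)} \#^{(i)}_1 \cdots \alpha^{(i)} \#^{(i)}_{\ell}$ in the cheapest possible way: every block $\alpha^{(i)} \#^{(i)}_\gamma$ becomes a single LZ-phrase whose proper prefix $\alpha^{(i)}$ copies the \emph{first} occurrence of $\alpha^{(i)}$, namely the length-$|\alpha^{(i)}|$ prefix of $\s$, and whose final character is the brand-new separator $\#^{(i)}_\gamma$. This is a legal LZ-phrase: $\alpha^{(i)}$ is a prefix of $\s$ and hence has occurred strictly before, while each $\#^{(i)}_\gamma$ appears for the first time at that position (so the source never overlaps the phrase it feeds). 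Since each $\beta^{(i)}$ consists of $\ell$ such blocks, this contributes exactly $\ell$ phrases per level, i.e.\ $(c-1)\ell$ phrases altogether, and the total size becomes $(4n+6m+k) + (c-1)\ell$, matching the claim. The phrase count is therefore immediate from the construction.

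The only part that requires genuine verification is the hop bound, and I would establish it by induction on $i$, proving that under this parsing the maximum hop-number over all positions of the prefix $\alpha^{(i)}$ is at most $i$. The base case $i=1$ is precisely the guarantee of Theorem~\ref{thm:case1}. For the inductive step I use $\alpha^{(i)} = \alpha^{(i-1)} \beta^{(i-1)}$: positions inside the prefix $\alpha^{(i-1)}$ have hop-number at most $i-1$ by the induction hypothesis; each separator $\#^{(i-1)}_\gamma$ sits at the end of its phrase and has hop-number $0$; and a position inside a copy of $\alpha^{(i-1)}$ in $\beta^{(i-1)}$ has as its source the corresponding position of the prefix occurrence of $\alpha^{(i-1)}$, whose hop-number is at most $i-1$, so its own hop-number is at most $1+(i-1)=i$. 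Hence the maximum hop over $\alpha^{(i)}$ is at most $i$, and applying this with $i=c$ gives $hop^{\varphi}_{\max}(\s) \le c$, so the constructed parsing is indeed a $c$-BLZ parsing.

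I expect this inductive hop argument to be the main obstacle, or at least the only delicate point. The crux is to confirm that every copy of $\alpha^{(i)}$ may always take its source to be the fixed first (prefix) occurrence, and that doing so never inflates the hop-numbers beyond the inductive budget, because the prefix is parsed recursively in exactly the same manner and thus already satisfies the bound. Everything else—the legality of the LZ-phrases and the phrase count—is routine bookkeeping.
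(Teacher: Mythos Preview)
Your proposal is correct and follows essentially the same approach as the paper: both start from the $1$-BLZ parsing of $\alpha^{(1)}$ given by Theorem~\ref{thm:case1}, parse each $\beta^{(i)}$ into $\ell$ phrases of the form $\alpha^{(i)}_\gamma \#^{(i)}_\gamma$ with source the prefix occurrence of $\alpha^{(i)}$, and then verify the hop bound via the recurrence $hop^{\varphi}_{\max}(\alpha^{(i+1)}) = hop^{\varphi}_{\max}(\alpha^{(i)}) + 1$. Your induction is in fact spelled out a bit more carefully than the paper's one-line recurrence, but the underlying argument is identical.
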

\begin{proof}
By Theorem \ref{thm:case1}, there is $1$-BLZ parsing $\varphi^{(1)}$ of $\alpha^{(1)}$ of size $4n+6m+k.$ 
Let $\varphi$ be the parsing of $\s$ that uses exactly the same phrases as
$\varphi^{(1)}$ on $\alpha^{(1)}$ and then parses each $\beta^{(i)}$ using 
one phrase for each substring $\alpha^{(i)}_{j} \#^{(i)}_j,$ i.e.
$$\varphi(\beta^{(i)}) = 
\mid \alpha^{(i)}_{1} \#^{(i)}_1\mid \alpha^{(i)}_{2} \#^{(i)}_2 \mid
\cdots \mid \alpha^{(i)}_{\ell} \#^{(i)}_{\ell} \mid,$$
where in each phrase, the prefix $\alpha^{(i)}_{j}$ is referred back to the 
occurrence of $\alpha^{(i)}$ preceding $\beta^{(i)}.$
Hence we have that the overall parsing uses $4n+6m+k+\ell(c-1)$ phrases. 

Moreover, we have that for each $i=1, \dots, c-1$
$$hop^{\varphi}_{\max}(\alpha^{(i+1)}) = hop^{\varphi}_{\max}(\beta^{(i)}) = hop^{\varphi}_{\max}(\alpha^{(i)}) + 1$$ which implies that
$$hop^{\varphi}_{\max}(\s) = hop^{\varphi}_{\max}(\alpha^{(c)}) = hop^{\varphi}_{\max}(\beta^{(c-1)}) = 
%hop^{\varphi}_{\max}(\alpha^{(1)}) + c-1 = 
hop^\varphi_{\max}(\alpha^{(1)}) + c-1 = c.$$
\end{proof}

\begin{lemma} \label{lemma:LB2-alpha1}
Assume that $\tau(G) = k' > k$ and let $\varphi$ be a $c$-BLZ parsing of $\s$ that on $\alpha^{(1)}$ uses 
$4n+6m+t$ phrases for some $t \leq k.$ Then, there are at least $z = k' - t$ positions in distinct $Y_j$ substrings of $\alpha^{(1)}$ with hop-number  $\geq 2.$   
\end{lemma}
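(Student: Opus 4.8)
The plan is to match the phrase budget $4n+6m+t$ against the forced cost of the three blocks $P$, $X$, $Y$ of $\alpha^{(1)}$, and to argue that the vertices missing from a cover must be "paid for" either by extra phrases inside $Y$ or by positions of hop-number $\geq 2$.

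First I would set up the bookkeeping. Since $\alpha^{(1)}=PXY$ is a \emph{prefix} of $\s$, every source of a phrase contained in $\alpha^{(1)}$ again lies in $\alpha^{(1)}$, so on $\alpha^{(1)}$ the parsing $\varphi$ behaves exactly as a parsing of $\s^{(1)}$, the only difference being that hop-numbers are allowed to exceed $1$. Exactly as in Theorem~\ref{thm:case1}, $P$ is forced into $|P|=2n+2m$ singleton phrases, and the same first-occurrence argument shows each $X_i$ is parsed into either $2$ or $3$ phrases. Writing $h$ for the number of $X_i$ parsed into $3$ phrases and invoking the $c$-BLZ analogue of Claim~1 of Theorem~\ref{thm:case1} (so that every selected vertex is selected inside its own $X_i$), the set $C$ of selected vertices satisfies $|C|=h$ and $|\varphi(X)|=2n+h$. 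By Claim~2 of Theorem~\ref{thm:case1} each $Y_j$ uses at least $4$ phrases, so putting $s=|\varphi(Y)|-4m\geq 0$ for the number of ``extra'' phrases in $Y$, the identity $4n+6m+t=(2n+2m)+(2n+h)+(4m+s)$ yields $h+s=t$.

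Next I would exploit that $C$ is too small to be a cover. Since $|C|=h\leq t\leq k<k'=\tau(G)$, the set $C$ is not a vertex cover, and extending it by one vertex per uncovered edge gives a cover of size $\leq h+u$ where $u$ is the number of uncovered edges; hence $u\geq k'-h$. For each uncovered edge $e_j=\{v_p,v_q\}$ neither endpoint is selected, so $Y_j$ either uses $\geq 5$ phrases or uses exactly $4$. Let $a$ be the number of the first type and $b$ the number of the second. Since $s=\sum_j(|\varphi(Y_j)|-4)$ and every summand is nonnegative, $a\leq s$; therefore $b=u-a\geq (k'-h)-s=k'-t=z$. The statement then follows once I show each uncovered $Y_j$ parsed into exactly $4$ phrases contains a position of hop-number $\geq 2$: these are $\geq z$ \emph{distinct} $Y_j$'s, each contributing one such position.

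The crux, and the step I expect to be the main obstacle, is precisely this last implication. I would derive it as the contrapositive of the ``only if'' direction of Claim~3 of Theorem~\ref{thm:case1}. On re-inspection, the sole use of $1$-BLZ-ness in that argument is the assertion ``a phrase containing $v_p'\,v_p$ cannot also contain $e_j$, for otherwise $v_p'$ would have hop $2$''; this invokes only that the positions of $Y_j$ have hop $\leq 1$, together with the (global) fact that neither $v_p$ nor $v_q$ is selected. Hence the argument really proves: if neither endpoint is selected and every position of $Y_j$ has hop $\leq 1$, then $Y_j$ requires $\geq 5$ phrases. Contrapositively, an uncovered $Y_j$ using exactly $4$ phrases must contain a position of hop-number $\geq 2$, which is exactly what the counting above needs. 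The care required is to check that this refined reading of Claim~3 depends only on the hop-values \emph{inside} $Y_j$ and on the notion of selection, so that it remains valid in the $c$-BLZ regime, where hops outside $Y_j$ may be as large as $c$.
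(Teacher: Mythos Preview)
Your proposal is correct and follows essentially the same approach as the paper's proof: both restrict $\varphi$ to $\alpha^{(1)}$, count phrases block by block ($P$, $X$, $Y$), identify the set $C$ of selected vertices, lower-bound the number of uncovered edges by $k'-|C|$, and then split the uncovered $Y_j$'s into those using $\geq 5$ phrases versus those using exactly $4$ (which must contain a hop-$\geq 2$ position by the contrapositive of Claim~3). Your bookkeeping via the exact identity $h+s=t$ is marginally tighter than the paper's inequality $g+|C_\varphi|\leq t$, and you are more explicit than the paper about why Claim~3's ``only if'' direction transfers from the $1$-BLZ to the $c$-BLZ setting, but the underlying argument is the same.
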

\begin{proof}
Let $\varphi^{(1)}$ be the parsing of $\alpha^{(1)}$ which uses exactly the same phrases as $\varphi.$
Then, $|\varphi^{(1)}| \leq 4n+6m +k.$ 
By Theorem \ref{thm:case1} if there is no vertex cover of size $k$ in $G$ then $\varphi^{(1)}$ cannot be a 
$1$-BLZ parsing of $\alpha^{(1)},$ i.e., 
for some position the  hop number induced by $\varphi$ is
$>1.$

By Claim 3 in the proof of Theorem \ref{thm:case1}, 
we can  assume, w.l.o.g., that if $\varphi^{(1)}$
selects a vertex $v_p$ then it parses 
$X_p$ into three phrases .
Let us denote by $C_{\varphi}$ the set of 
vertices that are selected by $\varphi^{(1)}.$

We have $|\varphi^{(1)}(P)| = 2n + 2m$ and by Claim 1 in the proof of Theorem \ref{thm:case1}, 
$|\varphi^{(1)}(Y)| \geq 4m$ and 
$\varphi^{(1)}(X) = 2n + |C_{\varphi}|.$
Hence, together with 
$|\varphi^{(1)}(\alpha^{(1)})| \leq 4n+6m + k$ it follows that $|C_{\varphi}| \leq k.$
Then, $C_{\varphi}$ is not a vertex cover and in particular, since there are no vertex covers of size $< k',$ there are at least $k' - |C_{\varphi}|$
edges $e_i = (v_p, v_q)$ of $G$ where 
neither $v_p$ nor $v_q$ are in $C_{\varphi}.$
By Claim 3 in the proof of Theorem \ref{thm:case1}, if $\varphi^{(1)}(Y_i) = 4$ there is a position in $Y_i$ with hop number $>1.$
Let $g$ be the number of $Y_i$ parsed into $5$ phrases and $z$ be the number of 
$Y_i$ parsed into $4$ phrases and having a position with $hop \geq 2.$ 

We have $g + |C_{\varphi}| \leq t.$ 
Then, from $g+z \geq k' - |C_{\varphi}|$ we get the desired result  
$$z \geq k' - |C_{\varphi}| - g \geq k'-t.$$
\end{proof}

\begin{lemma} \label{lemma:LB3-beta1}
Let $W = \{x_1, \dots, x_z\}$ be a set of indices of characters in $\alpha^{(1)},$ where
\begin{itemize}
\item for each $j=1, \dots, z,\, hop(x_j) \geq 2,$ 
\item  for $1 \leq j < j' \leq z$ there exist 
$i_j \neq i_{j'}$ such that
$x_j$ is in $Y_{i_j},$ $x_{j'}$ is in $Y_{i_{j'}}.$ 
\end{itemize}

For each $x_j \in W$ let us denote by $x_j^{\gamma}$ the corresponding position in $\alpha^{(1)}_{\gamma}.$
Moreover, we denote by $x_{z+1}^{\gamma}$ the unique position in $\beta^{(1)}$ of the character $\#_{\gamma}^{(1)}$

If there exists $j \in \{1, \dots, z\}$ such that for every $\gamma = 1, \dots, \ell,$ there is no phrase of $\varphi$ ending between 
$x_j^{\gamma}$ and $x_{j+1}^{\gamma}$ then in each one of the substrings $\alpha^{(1)}_{\gamma}$ there is a position $x$ with $hop(x) \geq 3.$
\end{lemma}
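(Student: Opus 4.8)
The plan is to fix an index $j$ satisfying the hypothesis, fix an arbitrary copy $\gamma \in \{1, \dots, \ell\}$, and prove directly that the copy-image $x_j^\gamma$ of $x_j$ already satisfies $hop^{\varphi}(x_j^\gamma) \geq 3$; since $x_j^\gamma$ lies in $\alpha^{(1)}_\gamma$, this is exactly what the lemma asks for. The argument is an induction on the index of the source copy, grounded in the base fact $hop^{\varphi}(x_j) \geq 2$, which is part of the definition of $W$.

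First I would establish that $x_j^\gamma$ is interior to a single phrase of $\varphi$. By hypothesis no phrase of $\varphi$ ends between $x_j^\gamma$ and $x_{j+1}^\gamma$, so the positions $x_j^\gamma, x_j^\gamma+1, \dots, x_{j+1}^\gamma$ all lie in one phrase $\phi = \s[a,b]$ with $a \leq x_j^\gamma < x_{j+1}^\gamma \leq b$. In particular $x_j^\gamma$ is not the last character of $\phi$, so it has a source $q$ and $hop^{\varphi}(x_j^\gamma) = hop^{\varphi}(q)+1$. The one delicate point here is to rule out a phrase ending exactly at $x_j^\gamma$; I would read ``between'' so as to forbid a phrase-end at $x_j^\gamma$ itself, which is the natural reading that makes $x_j^\gamma$ interior, and I would flag this as the only boundary subtlety of this step.

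The crux is to pin down the source $q$: I claim $q = x_j^{\gamma'}$, the copy-aligned image of $x_j$ in an earlier copy $\alpha^{(1)}_{\gamma'}$ with $\gamma' < \gamma$ (allowing $\gamma'=0$ to denote the original prefix $\alpha^{(1)}$). The reason is an alignment/uniqueness argument. The proper prefix $\s[a,b-1]$ of $\phi$, which equals its source by content, contains the substring $\s[x_j^\gamma, x_{j+1}^\gamma-1]$. Because $x_j$ and $x_{j+1}$ lie in distinct edge-gadgets $Y_{i_j}\neq Y_{i_{j+1}}$ (and for $j=z$ the endpoint $x_{j+1}^\gamma$ is the unique separator $\#^{(1)}_\gamma$, so this substring is the whole suffix of $\alpha^{(1)}_\gamma$), this substring crosses the end of $Y_{i_j}$ and hence contains the character $\#^e_{i_j}$, which occurs exactly once inside any single copy of $\alpha^{(1)}$ and does not occur at all in the $P$- or $X$-regions. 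Consequently $\s[x_j, x_{j+1}-1]$ has a unique occurrence inside any copy of $\alpha^{(1)}$, which forces the matching source occurrence to be copy-aligned and therefore $x_j^\gamma$ to map precisely to $x_j^{\gamma'}$ for some $\gamma'<\gamma$. I expect this alignment step to be the main obstacle, and the point at which the unique symbols $\#^e_i$ and $\$_i$ built into the $Y_i$'s are essential; the same symbols also guarantee the source cannot be shifted within a copy.

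Finally I would close the induction. If $\gamma'=0$ then $q=x_j$ and $hop^{\varphi}(x_j^\gamma)=hop^{\varphi}(x_j)+1\geq 3$. If $\gamma'\geq 1$, then the hypothesis applies verbatim to the copy $\gamma'$, so $x_j^{\gamma'}$ is again interior to a single phrase and, by the same alignment argument, its source is $x_j^{\gamma''}$ with $\gamma''<\gamma'$. Since source copies strictly decrease and LZ sources always precede, this chain terminates at the original after at least one copy-hop, each contributing $+1$, so $hop^{\varphi}(x_j^\gamma)\geq hop^{\varphi}(x_j)+1\geq 3$. As $\gamma$ was arbitrary, every $\alpha^{(1)}_\gamma$ contains the position $x_j^\gamma$ of hop-number $\geq 3$, which completes the proof.
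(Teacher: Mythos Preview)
Your proposal is correct and follows essentially the same approach as the paper: both arguments hinge on the fact that the phrase containing $x_j^\gamma$ (and $x_{j+1}^\gamma$) must also contain the unique separator $\#^e_{i_j}$, which forces its source to be a copy-aligned occurrence in some earlier $\alpha^{(1)}_{\gamma'}$ (or the original $\alpha^{(1)}$), and both then conclude by induction on $\gamma$ that $hop(x_j^\gamma)\geq hop(x_j)+1\geq 3$. The paper's version is slightly terser---it first notes that the separators $\#^{(1)}_\gamma$ prevent any phrase from straddling two copies, and it writes the induction as the single inequality $hop(x_j^\gamma)\geq \min\{hop(x_j),hop(x_j^1),\dots,hop(x_j^{\gamma-1})\}+1$---but the substance is identical.
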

\begin{proof}
We begin by observing that for each $\gamma = 1, \dots, \ell,$ there is a phrase ending with $\#^{(1)}_{\gamma}$ since there is no previous occurrence in $\s$ of such a character. Hence, for any $\gamma \neq \gamma'$ there is no phrase of $\varphi$ that
includes characters from both $\alpha^{(1)}_{\gamma}$ and 
$\alpha^{(1)}_{\gamma'}.$

Let us now assume that there is an integer $j$ satisfying the hypothesis. Then, for each $\gamma = 1, \dots,  \ell,$
we have that the phrase $\phi_{\gamma}$ including $x_{j+1}^{\gamma}$ also includes $x_{j}^{\gamma}$ and then also the  character $\#^e_{i_j}$ that appears at the end of the copy of $Y_{i_j}$ in $\alpha^{(1)}_{\gamma}.$

Let us now focus on $\gamma  = 1.$ Since, 
the only occurrence of $\#^e_{i_j}$ preceding $\phi_1$ is
between position $x_j$ and $x_{j+1}$ we have that  
$\phi_1$ must refer to a substring including $x_j$ and $x_{j+1},$ whence $hop(x_{j}^{1}) = hop(x_j) + 1 \geq 3.$ 
Analogously, for each $\gamma = 2, \dots, \ell,$
we have that 
$\phi_{\gamma}$ must refer to a substring including $x_j$ and $x_{j+1},$ or $x_j^{\gamma'}$ and $x_{j+1}^{\gamma'},$
for some $\gamma' < \gamma.$
Therefore, $hop(x_{j}^{\gamma}) \geq 
\min \left\{hop(x_j), hop(x_j^1), \dots, hop(x_j^{\gamma-1}) \right\} + 1 \geq 3.$
\end{proof}

\begin{corollary} \label{corollary:LB3}
If $G$ has no vertex cover of size $k$ and $\varphi$ uses $\leq 4n+6m+k$ phrases in $\alpha^{(1)}$ and $< \ell+ z $   phrases in $\beta^{(1)}$  then on 
$\alpha^{(2)}$ it produces $\ell$ positions $x^{(1)}_1, \dots, x^{(1)}_{\ell}$ within $\beta^{(1)}$ with $hop \geq 3$ such that $\#^{(1)}_j$ is between position $x^{(1)}_j$ and position $x^{(1)}_{j+1}$
for each $j=1, \dots, \ell-1.$ 
\end{corollary}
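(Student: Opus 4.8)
The plan is to chain Lemma~\ref{lemma:LB2-alpha1} and Lemma~\ref{lemma:LB3-beta1}, supplying the combinatorial hypothesis of the latter by a phrase-counting argument inside $\beta^{(1)}$. First I would invoke Lemma~\ref{lemma:LB2-alpha1}. Since $G$ has no vertex cover of size $k$ we have $\tau(G)=k'>k$; writing the number of phrases $\varphi$ uses on $\alpha^{(1)}$ as $4n+6m+t$ with $t\le k$, the lemma yields $z=k'-t\ge k'-k\ge 1$ positions $x_1,\dots,x_z$ lying in pairwise distinct edge-substrings $Y_{i_1},\dots,Y_{i_z}$ of $\alpha^{(1)}$, each with $hop\ge 2$. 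This is exactly the set $W$ required as input to Lemma~\ref{lemma:LB3-beta1}, so the only thing left to verify is that $W$ meets the qualifying condition of that lemma.

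The crux is to establish that the hypothesis of Lemma~\ref{lemma:LB3-beta1} actually holds, i.e.\ that some index $j\in\{1,\dots,z\}$ has the property that no phrase of $\varphi$ ends strictly between $x_j^{\gamma}$ and $x_{j+1}^{\gamma}$ in \emph{any} copy $\gamma=1,\dots,\ell$. I would prove this by contradiction: assume that for every $j$ there is at least one copy $\gamma_j$ in which a phrase ends strictly inside the interval $(x_j^{\gamma_j},x_{j+1}^{\gamma_j})$. Within a fixed copy these intervals are consecutive and hence pairwise disjoint, and intervals in different copies are trivially disjoint, so the chosen phrase endings sit at $z$ distinct positions. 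Moreover every separator $\#^{(1)}_{\gamma}$ must terminate a phrase, because it is a fresh character with no earlier occurrence in $\s$; this contributes $\ell$ further phrase endings, and each of them is distinct from the $z$ interior ones, since an interior ending—even in the last interval, where $x_{z+1}^{\gamma}=\#^{(1)}_{\gamma}$—lies \emph{strictly} before the corresponding separator. Hence $\varphi$ would use at least $\ell+z$ phrases on $\beta^{(1)}$, contradicting the standing assumption $<\ell+z$.

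Consequently the hypothesis of Lemma~\ref{lemma:LB3-beta1} is met for some $j$, and the lemma guarantees a position of hop-number $\ge 3$ inside each copy $\alpha^{(1)}_{\gamma}$; I would label it $x^{(1)}_{\gamma}$. The interleaving claim is then immediate from the layout $\beta^{(1)}=\alpha^{(1)}_1\#^{(1)}_1\cdots\alpha^{(1)}_\ell\#^{(1)}_\ell$: as $x^{(1)}_j$ sits in $\alpha^{(1)}_j$, which precedes $\#^{(1)}_j$, while $x^{(1)}_{j+1}$ sits in the next copy $\alpha^{(1)}_{j+1}$, the separator $\#^{(1)}_j$ falls strictly between $x^{(1)}_j$ and $x^{(1)}_{j+1}$ for every $j=1,\dots,\ell-1$.

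I expect the main obstacle to be the bookkeeping in the counting step, specifically the quantifier interchange: one must check that negating the ``single good $j$'' statement produces, for \emph{each} of the $z$ indices, a phrase ending that is genuinely new—distinct both from the endings attached to other indices and from all $\ell$ separator endings. The disjointness of the intervals $(x_j^{\gamma},x_{j+1}^{\gamma})$ and the strict-before-separator placement are what make this accounting valid; beyond this, no idea is needed that is not already supplied by Lemma~\ref{lemma:LB2-alpha1} and Lemma~\ref{lemma:LB3-beta1}.
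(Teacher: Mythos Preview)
Your proposal is correct and follows essentially the same route as the paper: invoke Lemma~\ref{lemma:LB2-alpha1} to obtain the set $W$, use a pigeonhole/counting argument on phrase endings in $\beta^{(1)}$ (separator endings plus interior endings) to certify the hypothesis of Lemma~\ref{lemma:LB3-beta1}, and then read off the $\ell$ high-hop positions. Your treatment of the counting step is in fact more careful than the paper's terse version---you make explicit the disjointness of the intervals across and within copies and the distinctness of the interior endings from the separator endings---but the underlying argument is identical.
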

\begin{proof}
If $\tau(G) =  k' > k$ and $|\varphi(\alpha^{(1)})| = 4n+6m+t \leq 4n+6m+k,$ by Lemma \ref{lemma:LB2-alpha1}, there are $z \geq k' - t$ positions $x_1, \dots, x_z,$ with each $x_j$ in a distinct $Y_{i_j}$ and such that $hop(x_j) = 2.$      
Since $\varphi$ uses $< \ell+ g $ phrases in $\beta^{(1)}$
and it must have a phrase ending at the character
$\#^{(1)}_{\gamma}$ for each $\gamma = 1, \dots, \ell,$
it follows that, using the notation of Lemma \ref{lemma:LB3-beta1}, there is a $j \in \{1, \dots, z+1\}$ such that there is no phrase ending between $x_j^{\gamma}$ and $x_{j+1}^{\gamma+`1}.$
Then, the desired result follows by Lemma \ref{lemma:LB3-beta1}.
\end{proof}

\begin{lemma} \label{lemma:LB2l}
Let $\varphi$ be a parsing of $\s$ such that for some $i \in \{1,2, \dots, c-2\},$ and for each $j = 1, 2, \dots, \ell$ in the substring $\alpha^{(i)}_j$ of
$\beta^{(i)}$ there is a position $x_j$ such that $hop(x_j) > i+1.$
For each $\gamma = 1, \dots, \ell$ let $x^{\gamma}_j$ be the position in $\alpha^{(i+1)}_\gamma$ corresponding to $x_j$. 
If $\varphi$ uses strictly less than $2\ell$ phrases in $\beta^{(i+1)}$ then 
there is a $j \in \{1, 2, \dots, \ell\}$ such that 
for each $\gamma=1, \dots, \ell$ the position $x_{j}^{\gamma}$ in $\alpha^{(i+1)}_{\gamma}$ satisfies $hop(x_{j}^{\gamma}) > i+2.$
\end{lemma}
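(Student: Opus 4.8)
The plan is to mirror the argument of Lemma~\ref{lemma:LB3-beta1} and Corollary~\ref{corollary:LB3}, but one level higher in the recursive construction, with the level-$i$ separators $\#^{(i)}_j$ playing the role that the edge-separators $\#^e_{i_j}$ played at the base level. First I would fix, inside each copy $\alpha^{(i+1)}_\gamma$ of $\alpha^{(i+1)}$ occurring in $\beta^{(i+1)}$, the $\ell$ marked positions $x_1^\gamma < \dots < x_\ell^\gamma$ (the copies of $x_1,\dots,x_\ell$, which all lie in the $\beta^{(i)}$-part of $\alpha^{(i+1)}_\gamma$), and set $x_{\ell+1}^\gamma$ to be the position of the separator $\#^{(i+1)}_\gamma$ terminating that copy, exactly as $x_{z+1}^\gamma$ was defined in Lemma~\ref{lemma:LB3-beta1}. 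Since each $\#^{(i+1)}_\gamma$ occurs only once in $\s$, every parsing must end a phrase at $x_{\ell+1}^\gamma$; in particular no phrase of $\varphi$ crosses a copy boundary, and at least $\ell$ of the phrases used by $\varphi$ in $\beta^{(i+1)}$ end at a separator.

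The next step is a pigeonhole argument to locate a single gap index $j$ that is clean in every copy simultaneously. Since $\varphi$ uses strictly fewer than $2\ell$ phrases in $\beta^{(i+1)}$ and $\ell$ of them are already accounted for by the separators, at most $\ell-1$ phrase-endings of $\varphi$ fall strictly inside the copies. Each such interior ending lies in at most one gap $(x_j^\gamma, x_{j+1}^\gamma)$, so it can spoil at most one of the $\ell$ candidate gap indices $j\in\{1,\dots,\ell\}$, where gap index $j$ counts as spoiled as soon as some copy $\gamma$ has an interior phrase boundary strictly between $x_j^\gamma$ and $x_{j+1}^\gamma$. As at most $\ell-1$ indices can be spoiled, at least one gap index $j$ survives: for every $\gamma=1,\dots,\ell$ there is no phrase of $\varphi$ ending between $x_j^\gamma$ and $x_{j+1}^\gamma$.

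I would then run the source-tracing argument of Lemma~\ref{lemma:LB3-beta1} with this surviving $j$. For each $\gamma$, the phrase $\phi_\gamma$ containing $x_{j+1}^\gamma$ also contains $x_j^\gamma$, and hence contains the single occurrence of $\#^{(i)}_j$ sitting between them (the separator placed after $\alpha^{(i)}_j$ in $\beta^{(i)}$), which lies in the proper prefix of $\phi_\gamma$. Because $\#^{(i)}_j$ occurs exactly once per copy of $\alpha^{(i+1)}$, the source of $\phi_\gamma$ must align $\#^{(i)}_j$ with an earlier occurrence, i.e.\ with the base copy $\alpha^{(i+1)}$ or with some copy $\gamma'<\gamma$; consequently the source of $x_j^\gamma$ is either $x_j$ or some $x_j^{\gamma'}$ with $\gamma'<\gamma$. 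Arguing by induction on $\gamma$ as in Lemma~\ref{lemma:LB3-beta1}, and using the hypothesis $hop(x_j)>i+1$, I obtain $hop(x_j^\gamma)\ge \min\{hop(x_j),hop(x_j^1),\dots,hop(x_j^{\gamma-1})\}+1>i+2$ for every $\gamma$, which is exactly the conclusion.

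The separator-forcing observation and the induction are routine, since both were already carried out at the first level. The only genuinely new ingredient, and the step I expect to require the most care, is the pigeonhole bookkeeping: one must correctly turn the phrase budget ``$<2\ell$'' into ``at most $\ell-1$ interior boundaries'', and verify that a single interior boundary invalidates only one gap index across \emph{all} copies, so that fewer than $\ell$ bad events cannot cover all $\ell$ candidate indices. Keeping the indexing consistent, namely that the $\ell$ gaps are indexed by $j\in\{1,\dots,\ell\}$ with $x_{\ell+1}^\gamma$ being the separator, is what makes the count come out exactly tight.
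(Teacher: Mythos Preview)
Your proposal is correct and follows essentially the same approach as the paper's proof: the separator-forced phrase endings at each $\#^{(i+1)}_\gamma$, a pigeonhole argument on the remaining at most $\ell-1$ interior boundaries to find a gap index $j$ that is clean in every copy, and then the source-tracing induction on $\gamma$ using the rarity of $\#^{(i)}_j$. Your write-up is in fact tidier than the paper's on one point: you explicitly set $x_{\ell+1}^\gamma$ to be the position of $\#^{(i+1)}_\gamma$, whereas the paper's proof uses $x_{j+1}^\gamma$ for $j=\ell$ without ever defining $x_{\ell+1}^\gamma$.
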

\begin{proof}
We begin by observing that for each $\gamma = 1, \dots, \ell,$ there is a phrase ending with $\#^{(i+1)}_{\gamma}$ since there is no previous occurrence in $\s$ of such a character. Hence, for any $\gamma \neq \gamma'$ there is no phrase of $\varphi$ that
includes characters from both $\alpha^{(i+1)}_{\gamma}$ and 
$\alpha^{(i+1)}_{\gamma'}.$

If besides these $\ell$ phrases, $\varphi$ uses $< \ell$ additional phrases in $\beta^{(i)}$ it follows that  
there is an integer 
$j \in \{1, 2, \dots, \ell\}$ such that
for each $\gamma=1, \dots, \ell$ no phrase of $\varphi$ ends between $x^{\gamma}_j$ and $x^{\gamma}_{j+1}$

Then, for each $\gamma = 1, \dots,  \ell,$
we have that the phrase $\phi_{\gamma}$ including $x_{j+1}^{\gamma}$ also includes $x_{j}^{\gamma}$ and then also the  character $\#^{(i)}_{j}$ that appears at the end of the $j$th copy of $\alpha^{(i)}$ in $\alpha^{(i+1)}_{\gamma}.$

Let us now focus on $\gamma  = 1.$ Since, 
the only occurrence of $\#^{(i)}_{j}$ preceding $\phi_1$ is
between position $x_j$ and $x_{j+1}$ we have that  
$\phi_1$ must refer to a substring including $x_j$ and $x_{j+1},$ whence $hop(x_{j}^{1}) = hop(x_j) + 1 \geq i+2.$ 
Analogously, for each $\gamma = 2, \dots, \ell,$
we have that 
$\phi_{\gamma}$ must refer to a substring including $x_j$ and $x_{j+1},$ or $x_j^{\gamma'}$ and $x_{j+1}^{\gamma'},$
for some $\gamma' < \gamma.$
Therefore, $hop(x_{j}^{\gamma}) \geq 
\min \left\{hop(x_j), hop(x_j^1), \dots, hop(x_j^{\gamma-1}) \right\} + 1 \geq i+2.$
\end{proof}

\begin{lemma} \label{lemma:noVC-noParsing} 
If $\tau(G) > k$ then any $c$-BLZ parsing $\varphi$ of $\s$ has size $> 4n+6m+k+ (c-1)\ell.$ 
\end{lemma}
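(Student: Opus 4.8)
The plan is to argue by contradiction, assembling the per-level lemmas already proved into a telescoping cascade that forces a position of hop-number exceeding $c$. Assume $\tau(G) = k' > k$ and suppose, towards a contradiction, that $\varphi$ is a $c$-BLZ parsing of $\s = \alpha^{(c)}$ with $|\varphi| \le 4n+6m+k+(c-1)\ell$.

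First I would set up clean phrase-budget bookkeeping. The characters $\#^e_m$ (ending $\alpha^{(1)}$) and $\#^{(i)}_{\ell}$ (ending $\beta^{(i)}$) each occur exactly once in $\s$, so every parsing is forced to end a phrase there; hence no phrase spans the boundaries between $\alpha^{(1)}, \beta^{(1)}, \dots, \beta^{(c-1)}$, and the phrase count splits exactly as $|\varphi| = a_1 + \sum_{i=1}^{c-1} b_i$, with $a_1 = |\varphi(\alpha^{(1)})|$ and $b_i = |\varphi(\beta^{(i)})|$. Writing $a_1 = 4n+6m+t$ and $b_i = \ell + s_i$ — where $t \ge 0$ because $a_1 \ge 4n+6m$ is the baseline count from Theorem \ref{thm:case1}, and $s_i \ge 0$ because $\beta^{(i)}$ carries the $\ell$ distinct separators $\#^{(i)}_1, \dots, \#^{(i)}_{\ell}$, each forcing a phrase end — the budget hypothesis collapses to the single slack inequality
$$t + \sum_{i=1}^{c-1} s_i \le k.$$
This inequality is the engine of the argument: it yields $t \le k$ and $s_i \le k$ for every $i$.

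Next I would check that this slack is too small to interrupt the cascade at any level. Since $\ell > k$, every $s_i \le k < \ell$, so $b_i < 2\ell$ for each $i \ge 2$; and since $t + s_1 \le k < k'$, we get $s_1 < k' - t$. With $t \le k$ I invoke Lemma \ref{lemma:LB2-alpha1} to obtain $z \ge k' - t > s_1$ positions in distinct $Y_j$ substrings of $\alpha^{(1)}$ with hop-number $\ge 2$. Because $b_1 = \ell + s_1 < \ell + z$, Corollary \ref{corollary:LB3} applies and produces, in each copy $\alpha^{(1)}_\gamma$ inside $\beta^{(1)}$, a position of hop-number $\ge 3$. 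This matches the hypothesis of Lemma \ref{lemma:LB2l} with $i=1$, and since $b_2 < 2\ell$ the lemma upgrades these to hop-number $\ge 4$ in each $\alpha^{(2)}_\gamma$ inside $\beta^{(2)}$. Iterating Lemma \ref{lemma:LB2l} for $i = 1, 2, \dots, c-2$, each application licensed by $b_{i+1} < 2\ell$, raises the guaranteed hop-number by one per level, so after the application at $i = c-2$ there is a position inside $\beta^{(c-1)} \subseteq \alpha^{(c)} = \s$ with hop-number $\ge c+1$. This contradicts $hop^{\varphi}_{\max}(\s) \le c$, completing the proof. For $c=1$ the claim is exactly direction~2 of Theorem \ref{thm:case1}, and for $c=2$ only the corollary step is needed.

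I expect the main obstacle to be the bookkeeping in the first two paragraphs rather than any new combinatorial idea: one must justify that the phrase count decomposes exactly across the separator-delimited blocks, so that the slack variables $t, s_i$ are well defined and non-negative, and then verify that the two distinct thresholds demanded downstream — $s_1 < z$ for Corollary \ref{corollary:LB3} and $s_i < \ell$ for Lemma \ref{lemma:LB2l} — are both consequences of the single inequality $t + \sum_i s_i \le k$ together with the standing assumptions $\ell > k$ and $k' > k$. The role of $\ell > k$ is precisely to secure the $s_i < \ell$ thresholds, while $k' > k$ secures $s_1 < z$; once these are in place the cascade runs mechanically through the pre-established lemmas.
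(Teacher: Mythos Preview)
Your proof is correct and follows essentially the same approach as the paper: both split the phrase count across the separator-delimited blocks, use the $\ell$-separator lower bound on each $\beta^{(i)}$, and then cascade Lemma~\ref{lemma:LB2-alpha1}, Corollary~\ref{corollary:LB3}, and Lemma~\ref{lemma:LB2l} to force a position of hop-number $>c$. The only difference is organizational: the paper proceeds by explicit case analysis (either some block exceeds its threshold and the budget is violated directly, or all thresholds are met and the cascade runs), whereas you assume the budget bound up front, extract the single slack inequality $t+\sum_i s_i \le k$, and deduce all thresholds at once; this is a cleaner packaging of the same argument.
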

\begin{proof}
Let  $\varphi$ be a $c$-BLZ parsing of $\s.$

\noindent
{\em Claim 1.} for each $i=1, \dots, c-1, \, |\varphi(\beta^{(i)}| \geq \ell.$ The claim easily follows from the fact that in $\beta^{(i)}$ there is the first occurrence of the characters $\#^{(i)}_1, \dots, \#^{(i)}_{\ell},$ hence each one of these characters must be the end of a distinct phrase.

\medskip

We now argue by cases.

\noindent
{\em Case 1.} $|\varphi(\alpha^{(1)})| > 4n+ 6m + k.$
Then, by Claim 1, we have that: 
$$|\varphi(\s)| = |\varphi(\alpha^{(1)})| + \sum_{i=1}^{c-1} |\varphi(\beta^{(i)})| > 4n+ 6m + k + \ell(c-1).$$

\medskip

\noindent
{\em Case 2} $|\varphi(\alpha^{(1)})| = 4n+ 6m + t,$ for some $t\leq k.$

Since $\tau(G) > k,$ by Lemma \ref{lemma:LB2-alpha1} we have that 
 for some integer $z \geq  \tau(G) - t$ and for each $i=1, \dots , z$ there is a positions $x_i$ in the substring $Y_i$ of $\alpha^{(1)}$ such that $hop(x_i) \geq 2.$  

\medskip

{\em Subcase 2.1} $|\varphi(\beta^{(1)})| \geq \ell+z.$

Then, again by Claim1, we have that in total $|\varphi|$ satisfies the desired bound:
\begin{multline}
|\varphi(\s)| = |\varphi(\alpha^{(1)})| + |\varphi(\beta^{(1)})| \sum_{i=2}^{c-1} |\varphi(\beta^{(i)})| \geq 4n+6m+t + \ell+ z + \ell(c-2) \\ \geq  4n+6m + t + \ell(c-1) + \tau(G) - t > 
4n+ 6m + k + \ell(c-1).
\end{multline}

{\em Subcase 2.2} $|\varphi(\beta^{(1)})| < \ell+z.$

Then, by Corollary \ref{corollary:LB3} ,
$\varphi$ uses in $\beta^{(2)}$ are less than $z$ it means that 
it follows, 
that for each $\gamma= 1, \dots, \ell$ in the substring $\alpha^{(2)}_{\gamma}$ there is a position  $x^{(2)}_{\gamma}$ 
such that $hop(x^{(2)}_{\gamma}) \geq 3.$ 

We now observe that under the standing hypothesis there must exist an integer $j^* \in \{2, 3, \dots, c-1\}$ such that 
$|\varphi(\beta^{(j^*)})| \geq 2\ell.$ For otherwise, 
by repeated application of Lemma \ref{lemma:LB2l} we have that
for $i=2, \dots, c-1$ 
 there  are $\ell$ position with hop-number $\geq i+2.$ This in 
 particular means that in $\beta^{(c-1)}$ there are positions with 
 hop-number $c+1$ contradicting the hypothesis that $\varphi$ is a
 $c$-BLZ parsing.

 Therefore, we can again bound the number of phrases of $\varphi$ as follows:
 \begin{multline}
 |\varphi(\s)| = |\varphi(\alpha^{(1)})| + |\varphi(\beta^{(j^*)})| +  \sum_{\substack{i=1 \\ i \neq j^* }}^{c-1} |\varphi(\beta^{(i)})| \geq 4n+6m+t + 2\ell+  \ell(c-2) \\ \geq  4n+6m + \ell + \ell(c-1) > 
4n+ 6m + k + \ell(c-1),
 \end{multline}
where we have used the fact that in each $\beta^{(i)}$ any parsing must use at least $\ell$ phrases and the assumption $\ell > k.$

We have shown that in all possible cases the size of $\varphi$ satisfied the desired bound.
\end{proof}

\bigskip

\noindent
{\bf The proof of Theorem \ref{thm:minVC-minParsing}.}
The result  is an immediate corollary of Lemma \ref{lemma:VC2Parsing} and Lemma \ref{lemma:noVC-noParsing}. 
\qed
\end{document}